\newtheorem{theorem}{Theorem}
\newtheorem{lemma}{Lemma}
\newtheorem{definition}{Definition}
\newtheorem{corollary}{Corollary}
\newtheorem{proposition}{Proposition}
\newcommand{\E}{\mathds{E}}
\renewcommand{\Pr}{\mathds{P}}
\newcommand{\ind}{\mathds{1}}
\newcommand{\sigs}{\mathbf s}
\title{Optimal Stopping with Interdependent Values\thanks{The work of S.\ Mauras, and D.\ Mohan has been partially funded by the European Research Council (ERC) under the European Union's Horizon 2020 research and innovation program (grant agreement No. 866132), by an Amazon Research Award, by the NSF-BSF (grant No. 2020788), and by a grant from TAU Center for AI and Data Science (TAD). This research was partially conducted while Mauras and Mohan were visiting the Simons Laufer Mathematical Sciences Institute (formerly MSRI) in Berkeley, California, during the Fall 2023 semester, which was funded by the National Science Foundation (grant No. DMS-1928930) and by the Alfred P. Sloan Foundation (grant G-2021-16778).}}
\author{Simon Mauras
            \thanks{INRIA, Tel Aviv University; {\tt simon.mauras@inria.fr}}
            \and
            Divyarthi Mohan
            \thanks{Tel Aviv University; {\tt divyarthim@tau.ac.il}}
            \and
            Rebecca Reiffenhäuser
            \thanks{University of Amsterdam; {\tt r.e.m.reiffenhauser@uva.nl}}
            }
\begin{document}
% Title page for title and abstract only.
\begin{titlepage}
\maketitle
\begin{abstract}
We study online selection problems in both the prophet and secretary settings, when arriving agents have interdependent values. In the interdependent values model, introduced in the seminal work of Milgrom and Weber [1982], each agent has a private signal and the value of an agent is a function of the signals held by all agents. Results in online selection crucially rely on some degree of independence of values, which is conceptually at odds with the interdependent values model.
For prophet and secretary models under the standard independent values assumption, prior works provide constant factor approximations to the welfare. On the other hand, when agents have interdependent values, prior works in Economics and Computer Science provide truthful mechanisms that obtain optimal and approximately optimal welfare under certain assumptions on the valuation functions.

We bring together these two important lines of work and provide the first constant factor approximations for prophet and secretary problems with interdependent values. We consider both the algorithmic setting, where agents are non-strategic (but have interdependent values), and the mechanism design setting with strategic agents. All our results are constructive and use simple stopping rules.
\end{abstract}

% Optionally include a table of contents
% \vspace{1cm}
% \setcounter{tocdepth}{2} % adjust to 1 if desired
% \tableofcontents
% \rrcomment{TODO remove table of contents for submission}
\end{titlepage}

%%%%%%%%%%%%%%%%%%%%%%%%%%%%%%%%%%%%%%%%%%%%%%%%%%%%%%%%%%%%%%%%%%%%%%%%%%%%%
%%%%%%%%%%%%%%%%%%%%%%%%%%%%%%%%%%%%%%%%%%%%%%%%%%%%%%%%%%%%%%%%%%%%%%%%%%%%%
\section{Introduction}
Consider a single-item auction, say for a piece of art, where buyers arrive online. The goal is to sell the item to the agent with the highest value, while making the decisions about whether to select each buyer immediately on their arrival. Additionally, buyers' values can depend on one another: a buyer interested in decorating their living room might be influenced by the impression of those arriving before him, and spontaneously attribute a higher value to the item if it is very popular.
A buyer that sees the item as a pure investment, on the other hand, will be interested in its resale value alone, which is fully determined only after the arrival of the very last buyer. We analyze settings of the above type, formally, of online selection processes with interdependent values. This means, we combine concepts from the areas of online selection/optimal stopping with those from the theory on interdependent values, both of which have raised a lot of recent interest due to their central and important applications in economics.

\paragraph{Online selection.} 
In the online settings we consider, formally, a sequence of $n$ numbers arriving in an online fashion. The goal is to select the highest number, with the restriction that for each one, we have to irrevocably select or reject it at the time of arrival. Without additional assumptions, no online algorithm can achieve nontrivial competitive ratio in this setting.

The two central models considered to enable close-to-optimal competitive ratios w.r.t. the (expected) maximum are prophet inequalities, and the secretary model.
In prophet inequalities, originally introduced in the $70$s, each of the $n$ numbers is assumed to be drawn independently from a known distribution $D_i$ on arrival. This assumption allows for a (best-possible) $2-$approximation to the expected optimum via simple threshold policies (e.g., \citet{KrengelS78}).\
In the secretary setting, the impossibility of obtaining good online algorithms is instead circumvented by the assumption that all numbers will arrive not adversarially, but in \emph{uniformly random order}.
The famous, original secretary problem dates back even further, where an optimal $\tfrac 1e$-approximation to the maximum (originally for the ordinal variant) was known since the early 60s (see \citet{Dynkin63}).
Both models have in past years fueled a large variety of research directions, comprising many combinatorial problem variants and constraints on the selectable sets of online elements. Most relevant to us, there are strong connections to economics (especially online auctions and e-commerce) since applications comprise a large number of auction settings.

\paragraph{Interdependence.} The celebrated interdependent values (IDV) model, introduced by \citet{MilgromWeber82} building up on \citet{wilson1969communications},\footnote{The 2020 Economics Nobel Prize was awarded to Milgrom and Wilson for their work on IDV and auction design~\cite{nobel2021considerations}.} is well-studied in the economics literature when considering settings where agents have partial information and their values may depend on the information of all bidders. For instance, suppose we have a house for sale, different potential buyers might have different partial information about the house (e.g., one may have information regarding the school district of the neighborhood and another might have a better assessment of the structural integrity) and a buyer's value for the house can be influenced by any or all of these information. In the IDV model, for a single-item allocation problem, each agent $i$ has a private signal $s_i$ and a public valuation function $v_i(\cdot)$ that maps the signals of all buyers $(s_1,\ldots,s_n)$ to a value for the item; that is, $v_i(\sigs)$ is $i$'s value for the item given a signal profile $\sigs=(s_1,\ldots,s_n)$. There is a long literature in economics and computer science studying mechanism design in the interdependent values model. While in the standard model with \emph{private values} the well-known VCG auction obtains optimal welfare truthfully, in the IDV model truthful welfare maximization is possible if and only if the valuation functions satisfy the \emph{single-crossing condition}~\cite{maskin1992,DM00,ausubel1999generalized}. Informally, under the single-crossing condition, each agent's signal has the most impact on their own valuation function compared to others' valuation functions. Recent work in EconCS takes an algorithmic approach and investigates approximation guarantees. Of particular interest to this paper is the result that when the valuation functions satisfy \emph{submodularity over signals} (SOS), there are truthful mechanisms that obtain a constant factor approximation to the optimal welfare~\cite{EdenFFGK19,LuSZ22,EdenFGMM23}. Informally, a valuation function is submodular over signals if for each $j$ the impact of increasing $s_j$ is higher when the other signals $s_{-j}$ are lower. 

\paragraph{Online selection with interdependence.} In order to marry the two directions above and obtain online algorithms for settings with interdependent valuations, we assume that agents arrive online, and the algorithm has to select one of them to sell an item or service to. We make the standard assumption %\dmdelete{(with very few exceptions in the literature on interdependent valuations)}
that agents' valuation functions are public and known beforehand. However, agents have private signals (e.g., opinions they form on some specific property of a house when seeing it) which are drawn from independent distributions\footnote{Note that the assumption of underlying distributions is not necessary in the secretary model, but is w.l.o.g. as long as we make no further assumptions on the distributions themselves.}.

For our results on interdependent values in online selection, an important distinction is whether or not the agents' values can depend on future (and hence, yet undetermined) signals. 
We therefore make different modeling assumptions on the nature of the underlying online market, specifically on the set of signals that can influence each agent's value:
\begin{itemize}
\item In a first setting, we consider \emph{myopic agents}, who promptly consume the item once selected; and hence receive a value dependent only on signals observed so far, minus potentially a \emph{prompt} payment charged immediately by the algorithm (which as well can only depend on signals observed so far). 
\item In a second setting, we consider \emph{farsighted agents}, who if selected will benefit from the item also in the future; and hence receive a value dependent on all signals, minus potentially a \emph{tardy} payment charged only at the end of the online algorithm (which can depend on all signals).
\end{itemize}
Note that we do not give results for myopic agents with tardy payments (since for the former, payments can also be determined promptly), or for farsighted agents with prompt payments (since these are only implementable when assuming that agents do not have any knowledge of other agents' signals or even distributions beforehand, i.e. not for the commonly used notions of incentive compatibility).

\subsection{Our results} We consider agents with interdependent values for a single item, in both the prophet and secretary models. In contrast to previous work, we do not restrict our view to submodular-over-signals valuations, but only make the weaker assumption of subadditivity (which, notably, constitutes a natural boundary for constant-factor online algorithms, which are incompatible with the existence of arbitrary complements).
Our results comprise the algorithmic setting, and the strategic setting with selfish agents for which we give (ex-post) incentive compatible mechanisms. We summarize them in the table below. In addition to the results listed in the table, note that for the stronger assumption of submodular over signals valuations, the factor of $2e$ in the secretary algorithm with farsighted/myopic agents can be improved to a $4$, see \cref{app:submodular}. 

\begin{table}[h]
\centering
\begin{tabular}{|c|c|c|c|}
\cline{2-4}
\multicolumn{1}{c|}{}& Agents & Algorithm & Mechanism\\
\hline
\multirow{2}{*}{Prophet} 
& farsighted & $\Omega(n)$ (\Cref{thm:prophet_farsighted_hard}) & $\Omega(n)$\\
\cline{2-4}
& myopic & $\leq 4$ (\cref{thm:prophet_myopic_algo}) & $\leq 8$ (\cref{thm:prophet_myopic_mechanism})\\
\hline
\multirow{2}{*}{Secretary} 
&farsighted & $\leq 2e$ (\Cref{thm:secretary_farsighted_algo}) & $\leq 4e$ (\Cref{thm:secretary_farsighted_mechanism})\\
\cline{2-4}
& myopic  & $\leq 2e$ (\Cref{thm:secretary_farsighted_algo}) &$\leq 4e$ (\Cref{thm:secretary_farsighted_mechanism})\\
\hline
\end{tabular}
\caption{Approximation ratio of our algorithms (no incentive constraints) and mechanisms (EPIC), when agents have subadditive-over-signals valuations.}
\end{table}
In particular, we obtain constant-factor approximation algorithms and mechanisms -- or prove their impossibility -- for each possible combination. This constitutes not only the first, but a close-to complete picture of the extent to which interdependent values and central paradigms in online selection can be combined. Notably, our algorithms exactly lose a factor of $2$ compared to the standard independent values settings, or a factor $4$ when we also consider incentives. Moreover, when removing the dependence of the agents' valuations on any signal except their own, our algorithms recover the original tight ratios from the prophet inequality and secretary problem.

Our constants are in general not optimal; 
this is in part due to the fact that for the corresponding offline settings with interdependence, optimal constants are also yet unknown.

%%%%%%%%%%%%%%%%%%%%%%%%%%%%%%%%%%%%%%%%%%%%%%%%%%%%%%%%%%%%%%%%%%%%%%%%%
\subsection{Related Work}

\paragraph{Interdependent values.} There is a rich economics literature on interdependent settings over the past 50 years. A common impossibility result emerging in the literature states that truthful welfare maximization is only possible if the valuations satisfy a strong condition such as single-crossing~\cite{ausubel1999generalized,DM00,JehieM01,JehielMMZ06,ItoP06,CKK15}. In recent years the computer science literature has seen much interest in studying the IDV model through the lens of approximation, in order to circumvent these impossibilities (e.g.,~\cite{RoughgardenTC16, ChawlaFK14, EdenFFG18, EdenFFGK19, EdenGZ22,CohenFMT23,gkatzelis2021prior,ChenEW,EdenFTZ21}). \citet{EdenFFG18} consider a setting with an approximate single-crossing condition and obtain approximately optimal welfare truthfully. 
In a breakthrough result, \citet{EdenFFGK19} establish a $4$-approximation when the valuations satisfy submodularity\footnote{\citet{EdenFFGK19} observe in their conclusion that some of their results extend to subadditive valuation functions.} over signals without any single-crossing type assumption and also extend the results to combinatorial auctions under an additional separability condition. \citet{LuSZ22} provide an improved  approximation bound of $3.315$ for single-item auctions and \citet{AmerTC21} provide a $2$-approximation in the special case of binary signals. More recently, works of \citet{EdenGZ22,EdenFGMM23} consider the more general setting where the valuations are private and establish a constant approximation under submodular valuations for single-item and multi-unit auctions . 

\citet{RoughgardenTC16} and \citet{Li13} study simple prior-independent mechanisms that obtain approximately optimal revenue under different assumptions, and \citet{ChawlaFK14} further minimize the assumptions needed. All these works assume some form of single-crossing type condition on the valuations. 

Prior works have also considered interdependent values in other settings beyond auctions. For example, \citet{chakraborty2010two} study interdependence in matching markets, \citet{CohenFMT23} study the public projects setting with interdependent values and \citet{BirmpasELR23} consider interdependence in the fair division problem.

\paragraph{Prophet inequality.}
Prophet Inequalities, originally introduced by \citet{KrengelS77,KrengelS78} and \citet{SamuelCahn84}, are one of the most central concepts in decision making for stochastic settings. After being employed for algorithmic mechanism design in online markets by \citet{HajiaghayiKS07} and \citet{ChawlaSH10}, prophet inequalities have been obtained for a large variety of prominent problem settings, e.g. online selection with matroid constraints \cite{KleinbergW12}, or online matchings \cite{AlaeiHL12}, with a strong focus on economic settings like combinatorial auctions \cite{DuttingFKL20,CorreaC23}. 
While most results crucially exploit independence of the value distributions $D_i$, few results are also known for restricted types of dependence (see e.g. \cite{SamuelCahn91,ImmorlicaSW23}).
Our work, in a similar spirit, follows a new approach to incorporating dependence between online values: here, while the distributions of online signals are independent, valuations are obtained from signals in a dependent fashion. A model of similar type (for myopic agents), to the best of our knowledge, has only been captured previously by \cite{brunel1979parier}, who prove the existence of a $2(1+\sqrt{3})$-approximation when the valuations are subadditive over signals, with a non-constructive proof.

\paragraph{Secretary problem.}
Following the optimal solution  for the original secretary problem by \citet{Dynkin63}, a rich body of work has introduced extensions and applications in various directions, perhaps most famous among which are matroid secretary problems initiated by \citet{BabaioffIK07}. Similarly to prophet inequalities, one major focus has been on mechanism design. For example, the optimal approximation factor of $\tfrac{1}{e}$ has been recovered for bipartite matchings \cite{Reiffenhauser19} and XOS combinatorial auctions \cite{AbelsKRV13}. Our results contribute to the large body of work on the above paradigms, by extending the range of prophet and secretary algorithms to applications with interdependent valuations. 
%%%%%%%%%%%%%%%%%%%%%%%%%%%%%%%%%%%%%%%%%%%%%%%%%%%%%%%%%%%%%%%%%%%%%%%%%
%%%%%%%%%%%%%%%%%%%%%%%%%%%%%%%%%%%%%%%%%%%%%%%%%%%%%%%%%%%%%%%%%%%%%%%%%
\section{Preliminaries and Model}

We consider a problem where $n$ agents have interdependent values: each agent $i\in [n]$ holds some private signal $s_i \in \mathbb R_+$ and a publicly known monotone valuation function $v_i: \mathbb R_{\geq 0}^n \rightarrow \mathbb R_{\geq 0}$, where monotonicity means that $v_i$ is non-decreasing w.r.t. the input vector. We denote $\mathbf s = (s_1, \dots, s_n)$, and for every subset $X\subseteq [n]$ we write $\mathbf s_X = (\ind_{1\in X}\cdot s_1, \dots, \ind_{n\in X}\cdot s_n)$, that is, we replace $s_i$ by $0$ if $i\notin X$.

We focus on online settings, where agent $t\in [n]$ arrives at time $t$ and we observe the signal $s_t$ (while, as is standard in the interdependent values literature, the valuation functions are publicly known). 
We say that an agent $i$ is:
\begin{itemize}
\item \emph{myopic}, if their value only depends on the signals received so far, i.e. is equal to $v_i(\sigs_{[i]})$.
\item \emph{farsighted}, if their value depends on all signals, i.e. is equal to $v_i(\sigs_{[n]})$.
\end{itemize}
Myopic agents model situations where the value is obtained instantly on selection, e.g. when agents bid for a good and their value depends on the signals they observed from agents arriving previously. Farsighted agents model the setting where e.g. a good is assigned right now, but the opinion of people that have not yet arrived influences the value it has to the winning agent (for instance, an investment they might want to re-sell later).
We design online algorithms, which observe agents one at a time, deciding whether to continue (and reject the current agent) or to stop (and select, i.e. assign the good to the current agent). The objective is to maximize the expected \emph{social welfare}, that is, the expected value of the agent selected by the algorithm.

For approximation purposes, we compare our algorithms to the expectation of the maximum value (in hindsight). In the secretary models, where instead of distributions, an instance is just a set of $n$ fixed signal values, the benchmark accordingly reduces to be the maximum such value. Note that this benchmark depends on whether agents are farsighted or myopic. The two settings are not directly comparable, as both the social welfare of the algorithm and of the benchmark are larger with farsighted agents.

%%%%%%%%%%%%%%%%%%%%%%%%%%%%%%%%%%%%%%%%%%%%%%%%%%%%%%%%%%%%%%%%%%%%%%%%%
\subsection{Subadditive valuation functions}

We consider a natural class of valuations called subadditive over signals (or simply subadditive). It captures contexts where signals (information) are not complements, roughly referring to the notion that they do not increase in value by combining them. This includes most settings explored in the literature, such as the mineral rights model~\cite{wilson1969communications} and the resale model~\cite{Klemperer98,myerson1981optimal}.

\begin{definition}[Subadditive over signals]
We say a valuation function $v(\cdot)$ is \emph{subadditive} over signals, if for any signal profile $\sigs$ and any $X\subseteq [n]$ we have
\[
v(\sigs)\le v(\sigs_X) + v(\sigs_{[n]\setminus X})
\]
\end{definition}

We note that the class of subadditive valuation functions is  strictly more general than the class of valuation functions that are \emph{submodular over signals}, which are well-studied in the IDV literature~\cite{EdenFFGK19,AmerTC21,LuSZ22,EdenGZ22,CohenFMT23,EdenFGMM23}.

%%%%%%%%%%%%%%%%%%%%%%%%%%%%%%%%%%%%%%%%%%%%%%%%%%%%%%%%%%%%%%%%%%%%%%%%%
\subsection{Incentive compatibility}
 In order to incentivize the agents to report truthfully, mechanisms (usually) charge the winning agent some payment. In general, a mechanism is an allocation algorithm together with a payment rule. When agents are farsighted, we allow those payments to be tardy, i.e. to only specify the paid amount after the last agent has arrived. When agents are myopic, we require the algorithm to have prompt payments, i.e. the price is specified and paid immediately when an agent is chosen.

For each agent $i$ let $x_i(\sigs_{[i]})$ be the indicator variable where $x_i = 1$ if and only if $i$ is selected by the given algorithm and $x_i = 0$ else. Let $p_i(\sigs) \ge 0$ denote the payment charged (wlog $p_i = 0$ if $x_i = 0$). Notice that in a prompt mechanism, $p_i$ only depends on reported signals $\sigs_{[i]}$.

\begin{definition}[EPIC]
   A mechanism $(\mathbf x, \mathbf p)$ is \emph{Ex-Post Incentive Compatible (EPIC)} if {truth-telling is a Nash equilibrium, that is if} for every $i\in [n], \sigs, s'_i$ we have
   \[x_i(\sigs_{[i]})\cdot v - p_i(\sigs) \ge x_i(\sigs_{[i-1]},s'_i)\cdot v - p_i(\sigs_{-i},s'_i) \]
   where $v$ is the value of agent $i$, which is equal to $v_i(\sigs)$ if agent $i$ is farsighted, and equal to $v_i(\sigs_{[i]})$ if agent $i$ is myopic.
\end{definition}

\citet{RoughgardenTC16} give a sufficient (and necessary) condition for an allocation rule to be implementable truthfully, which we recall in \Cref{lem:truthful}.

\begin{lemma}\label{lem:truthful}
For any deterministic allocation rule $x_i(\sigs_{[i]})$ which is monotone in $s_i$, that is
$$
\forall i,\forall \sigs, \forall s_i' \geq s_i,\qquad
x_i(\sigs_{[i]}) \leq x_i(\sigs_{[i-1]}, s_i'),
$$
there exists a payment $p_i(\sigs)$ such that the mechanism $(\mathbf x,\mathbf p)$ is EPIC. Moreover, if agents are myopic, then the price $p_i$ only depends on signals $\sigs_{[i]}$.
\end{lemma}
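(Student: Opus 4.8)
The plan is to implement the given allocation rule by the interdependent-values analogue of Myerson's payment identity, which is especially simple here because $x_i$ is deterministic. Fix an agent $i$ and the reported signals $\sigs_{[i-1]}$ of the agents preceding it. Since $x_i(\sigs_{[i-1]},\cdot)$ takes values in $\{0,1\}$ and, by hypothesis, is non-decreasing in $s_i$, the winning set $\{\,s_i\ge 0 : x_i(\sigs_{[i-1]},s_i)=1\,\}$ is an up-ray, so $x_i$ is a threshold rule with threshold
\[
s_i^\star \;=\; s_i^\star(\sigs_{[i-1]}) \;=\; \inf\{\,s_i\ge 0 : x_i(\sigs_{[i-1]},s_i)=1\,\}
\]
(interpreted as $+\infty$ when the set is empty). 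The point I would exploit throughout is that $s_i^\star$ does not depend on $i$'s own report.

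The payment I would use charges the winner its own valuation evaluated at its critical signal: whenever $x_i(\sigs_{[i]})=1$, set $p_i(\sigs)=v_i(\sigs_{[i-1]},s_i^\star,\sigs_{\{i+1,\dots,n\}})$ for a farsighted agent and $p_i(\sigs)=v_i\big((\sigs_{[i-1]},s_i^\star)_{[i]}\big)$ for a myopic agent, and $p_i=0$ when $x_i=0$. Since $s_i^\star$ depends only on $\sigs_{[i-1]}$, in the myopic case $p_i$ depends only on $\sigs_{[i]}$ — this gives the ``prompt'' clause of the lemma — and in both cases $p_i$ is independent of $i$'s reported signal $s_i$.

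For the EPIC check, fix the true profile $\sigs$; write $v$ for $i$'s true value ($v_i(\sigs)$ if farsighted, $v_i(\sigs_{[i]})$ if myopic) and $\hat v$ for the same quantity with $s_i$ replaced by $s_i^\star$, so that $\hat v$ is exactly the price paid by $i$ under any winning report. Because $x_i$ is a threshold rule, a report $r$ that wins satisfies $r\ge s_i^\star$ and a report $r$ that loses satisfies $r\le s_i^\star$; combined with monotonicity of $v_i$ in its $i$-th coordinate this yields $v\ge\hat v$ when $i$ wins truthfully and $v\le\hat v$ when $i$ loses truthfully. Now $i$'s utility from reporting $s_i'$ (others truthful) equals $v-\hat v$ if that report wins and $0$ if it loses, since the price on any winning report is the same value $\hat v$. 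Hence: if $i$ wins truthfully, truthful utility is $v-\hat v\ge 0$, at least the utility ($0$ or $v-\hat v$) of any other report; and if $i$ loses truthfully, truthful utility is $0$, while any winning misreport yields $v-\hat v\le 0$. So truth-telling is a best response and the mechanism is EPIC.

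I expect the only genuinely delicate point to be the tie at $s_i=s_i^\star$ — depending on whether the winning set is open or closed there, the agent may either win and pay $\hat v$ or lose and pay $0$ — but since $v=\hat v$ at that signal the utility is $0$ either way, so the tie-break is immaterial. It is also worth remarking that, unlike the welfare-optimal IDV mechanisms of \citet{RoughgardenTC16}, the argument needs no single-crossing assumption: monotonicity of the given rule $x_i$ in $s_i$ is exactly what makes $s_i^\star$ well defined, and this lemma is precisely the specialization of their characterization to the deterministic, already-monotone allocations our online algorithms produce.
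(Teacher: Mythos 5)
Your proof is correct and takes essentially the same approach as the paper: the critical-signal payment you write down, $v_i$ evaluated at $s_i^\star = \inf\{s_i : x_i(\sigs_{[i-1]}, s_i) = 1\}$, is exactly the paper's payment $\inf\{v_i(\sigs_{-i}, s_i') : x_i(\sigs_{[i-1]}, s_i') = 1\}$, since by monotonicity of $v_i$ the infimum over the winning up-ray is attained at its left endpoint. The only difference is presentational, as the paper defers the EPIC verification to \citet{RoughgardenTC16} whereas you spell out the routine threshold argument explicitly.
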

\begin{proof}
From \citet{RoughgardenTC16}. The payment is equal to
$$
p_i(\sigs) = x_i(\sigs_{[i]}) \cdot \inf\{v_i(\sigs_{-i}, s_i')\,|\, s_i'\geq 0\text{ such that }x_i(\sigs_{[i-1]}, s_i') = 1\}
$$
when agent $i$ is farsighted; and is equal to
$$
p_i(\sigs_{[i]}) = x_i(\sigs_{[i]}) \cdot \inf\{v_i(\sigs_{[i-1]}, s_i')\,|\, s_i'\geq 0\text{ such that }x_i(\sigs_{[i-1]}, s_i') = 1\}
$$
when agent $i$ is myopic. That is, the payment of the winning agent $i$ equals the minimum value of $i$ such that $i$ remains the winner, given the signals of the other agents.
\end{proof}
Although we make only very limited use of randomization in our results, note that  (universally) truthful, randomized mechanisms can be defined as lotteries over deterministic EPIC mechanisms.

%%%%%%%%%%%%%%%%%%%%%%%%%%%%%%%%%%%%%%%%%%%%%%%%%%%%%%%%%%%%%%%%%%%%%%%%%
%%%%%%%%%%%%%%%%%%%%%%%%%%%%%%%%%%%%%%%%%%%%%%%%%%%%%%%%%%%%%%%%%%%%%%%%%
\section{The Prophet Model}
We first consider the prophet model, where agents arrive online in an adversarial order and the signals are drawn independently from a known prior distribution. Formally, we have $n$ agents, each characterized by a signal distribution $D_i$ and a valuation function $v_i(\cdot)$. An adversary dictates the order in which agents arrive; without loss of generality, we relabel the agents so that agent $t$ arrives at time $t$. Upon the arrival of agent $i$, their signal $s_i$ is independently drawn from the distribution $D_i$. At this point, we need to decide whether to irrevocably reject the agent and continue the selection process, or to accept the agent and conclude the selection. Our goal is to design simple online algorithms that maximize the expected value of the accepted agent. We evaluate the performance of our algorithms by comparing it to the expected maximum value.% $\E_{\sigs}[\max_i v_i(\sigs)]$.

In \Cref{sec:prohpet_farsighted} we study the setting with far-sighted agents and  show a strong impossibility that no online algorithm can obtain better than $\Omega(n)$-approximation to the expected maximum value. In \Cref{sec:prophet_myopic_algo} we consider myopic agents and provide a simple online algorithm that obtains a $4$-approximation to the expected maximum welfare. Finally, we extend this result to settings with incentive constraints in \Cref{sec:prophet_myopic_mech}.

%%%%%%%%%%%%%%%%%%%%%%%%%%%%%%%%%%%%%%%%%%%%%%%%%%%%%%%%%%%%%%%%%%%%%%%%%

\subsection{Impossibility with far-sighted agents}
\label{sec:prohpet_farsighted}
We show that for farsighted agents, no algorithm can guarantee a competitive ratio below $\Omega(n)$. Our proof, on a high level, captures the fact that when values are determined by the very last signal to arrive, any choice made before that is reduced to essentially guessing.

\begin{theorem}\label{thm:prophet_farsighted_hard}
In the prophet setting with farsighted agents, any algorithm has a competitive ratio of at least $\Omega(n)$.
\end{theorem}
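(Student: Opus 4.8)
The plan is to construct a family of instances where all agents' values are governed almost entirely by the \emph{last} signal to arrive, so that any online algorithm, forced to commit before seeing it, is effectively selecting an agent uniformly at random among essentially indistinguishable candidates. Concretely, I would let each of the $n$ agents have an identically distributed signal $s_i \in \{0,1\}$ (say) and design the valuation functions so that $v_i(\sigs)$ is large only when agent $i$ is ``pointed to'' by the realized signal profile, and small otherwise; the natural device is a valuation of the form $v_i(\sigs) = f(s_n, i)$ that depends on the last coordinate (and perhaps a few others) in a way that singles out one index. The benchmark $\E[\max_i v_i(\sigs)]$ is then $\Theta(1)$ in the rescaled units (the max is always attained by the ``correct'' agent), while any algorithm that must decide on agent $t$ using only $\sigs_{[t]}$ has, conditioned on the history, no information distinguishing which of the remaining indices will turn out to be the correct one, so its expected value is $O(1/n)$ of the benchmark.

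The key steps, in order: (1) fix the signal distributions — i.i.d.\ uniform on a small set, or even a single deterministic-looking construction combined with the adversarial arrival order — so that the joint law is symmetric under relabeling of agents; (2) define the valuation functions, monotone and (ideally) subadditive-over-signals, such that exactly one agent's value is boosted to a large value $M$ as a function of the full profile, and the identity of that agent is (nearly) uniform and (nearly) independent of every strict prefix $\sigs_{[t]}$ for $t<n$; (3) compute the benchmark $\E[\max_i v_i]\ge M \cdot \Pr[\text{some agent is boosted}] = \Omega(M)$; (4) bound the algorithm: by a symmetry/coupling argument, at the time the algorithm stops on some agent $t$, the conditional probability that $t$ is the boosted agent is $O(1/n)$ (since at least one ``slot'' among $\{t,\dots,n\}$ is undetermined and exchangeable), so the algorithm's expected value is $O(M/n)$; (5) conclude the ratio is $\Omega(n)$. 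I should also make sure monotonicity of $v_i$ is respected — this typically forces the ``small'' value to be a genuine floor and the boost to be triggered by signals going \emph{up}, which constrains the construction slightly but is easy to arrange, e.g.\ $v_i(\sigs) = \varepsilon\cdot g(\sigs) + M\cdot \ind[\text{profile points to }i]$ with the indicator made monotone.

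The main obstacle I anticipate is making the ``points to agent $i$'' mechanism simultaneously (a) monotone in each signal, (b) subadditive over signals (or at least not relying on complementarities, since the paper emphasizes subadditivity as the natural boundary), and (c) genuinely uninformative in every prefix — these pull against each other, because a monotone function whose ``winner'' depends on the last signal tends to leak information through earlier signals. I expect the resolution is to use the adversarial arrival order as the real source of hardness: fix a symmetric signal structure, and let the adversary choose \emph{which} agent arrives last, so that from the algorithm's perspective the undetermined last arrival could equally be any of the not-yet-seen agents; then a clean indistinguishability argument gives the $1/n$ factor without needing a delicate valuation gadget. A secondary, minor point is to handle the case where the algorithm never stops (or is forced to take the last agent) — but taking the last agent only helps by a factor of $2$ at most and does not change the $\Omega(n)$ conclusion, so a short argument covering ``stop early vs.\ reach the end'' closes the proof.
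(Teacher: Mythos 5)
The high-level theme of your proposal is correct and matches the paper: put all the relevant information into the last signal $s_n$, so any algorithm must commit before learning it. But the specific construction you sketch has a gap that you yourself notice and then do not actually close, and the paper's resolution is a different gadget than the one you are reaching for.

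You want ``exactly one agent's value boosted to a large value $M$'', with the boosted index (nearly) uniform on $[n]$ and (nearly) independent of every prefix. You then note, correctly, that making the pointer monotone in each signal pulls against this -- and this is where the proposal stalls. A monotone single-coordinate indicator $\ind[s_n \ge c_i]$ forces a \emph{nested} structure: whenever agent $i$ is indicated, so is every $j$ with $c_j \le c_i$. You therefore cannot have ``exactly one'' boosted agent, and if you give all indicated agents the same value $M$ then always picking the lowest-indexed agent achieves $M$, killing the gap. (A binary $s_n\in\{0,1\}$ is even worse: it can only encode two states, not an index in $[n]$.) Your proposed fix -- shifting the hardness onto the adversarial arrival order -- does not touch the monotonicity obstruction; the order only decides \emph{which} agent's signal is $s_n$, not how to build a monotone partition-indicator.

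The missing idea is to embrace the nesting and compensate with scale: the paper sets $v_i(\sigs) = 2^i \cdot \ind[s_n \ge 1 - 1/2^i]$ with $s_n \sim U[0,1]$. Each indicator is monotone, the indicators are nested, and the value $2^i$ is chosen so that every fixed agent has the \emph{same} standalone expected value $1$ (value $2^i$ with probability $1/2^i$), while the expected maximum is $\tfrac{n+1}{2}$ because the largest indicated index has a geometric tail. Any deterministic stopping rule that commits to some index gets expected value exactly $1$; Yao's principle then rules out randomization. In short: your ``one uniformly-random winner'' picture needs to be replaced by a geometric chain of exponentially growing values, and that replacement is the heart of the argument rather than a cosmetic detail.
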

\begin{proof}
Assume that signal $s_n$ is drawn uniformly in $[0,1]$ and that each agent $i$ has a value $v_i(\mathbf s) = 2^i \cdot \ind[s_n \geq 1-1/2^i]$. Then the expected maximum value is
$$
\E[OPT] = \sum_{i=1}^{n-1} 2^i\cdot \Pr[1-1/2^{i+1} > s_n \geq 1-1/2^i] + 2^n\cdot \Pr[s_n1 \geq 1-1/2^n] = \frac{n+1}{2}
$$
Note that in the myopic setting, all agents always have valuation $0$, except for the $n\textsuperscript{th}$ agent. For farsighted agents, however, the realized signal $s_n$ (drawn from distribution $D_n$)
determines the point in time at which the sequence of increasing $(2^i)$-values stops (and only zeroes arrive from then on). 
Any deterministic algorithm stops at a fixed $i$ and yields an expected value of exactly $1$ (for any choice of $i$). This is because $v_i(\sigs) =2^i$ with probability $1/2^i$ and $0$ otherwise.
Using Yao's Lemma~\cite{Yao77}, randomized algorithms cannot give any improved approximation ratio on this random instance. Thus, we cannot do better than an $\Omega(n)$ approximation.
\end{proof}

Notice that an equivalent example, replacing $s_n$ with $s_1$, shows that the algorithm must indeed observe signals, and not only the agents' values on the signals of agents arrived so far.

%%%%%%%%%%%%%%%%%%%%%%%%%%%%%%%%%%%%%%%%%%%%%%%%%%%%%%%%%%%%%%%%%%%%%%%%%
\subsection{Algorithm with myopic agents}
\label{sec:prophet_myopic_algo}

Given the impossibility of any constant approximation with farsighted agents, we switch our focus to myopic agents. However, the following property observes that without any assumption (such as subadditivity) on the complementarity of signals, it is not possible to guarantee a competitive ratio below $\Omega(n)$.

\begin{proposition}\label{prop:prophet_general_hardness}
In the prophet setting with myopic agents and general valuation functions, any algorithm has a competitive ratio of at least $\Omega(n)$.
\end{proposition}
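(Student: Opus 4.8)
The plan is to recycle the idea behind \Cref{thm:prophet_farsighted_hard}, but to relocate the source of hardness from a single \emph{future} signal (which myopic agents cannot see) to a \emph{conjunction of past} signals, which is exactly the sort of complementarity that subadditivity rules out. Concretely, I would take signals $s_1,\dots,s_n$ drawn i.i.d.\ uniformly from $\{0,1\}$, and define
\[
v_i(\mathbf x) \;=\; 2^i\cdot \ind[\,x_1\ge 1,\ \dots,\ x_i\ge 1\,],
\]
which is monotone and well-defined on $\mathbb R_{\ge 0}^n$. For a myopic agent $i$ this gives $v_i(\sigs_{[i]}) = 2^i$ exactly when the first $i$ signals are all $1$, and $0$ otherwise. (One checks this family is not subadditive — e.g.\ $v_2(1,1)=4$ while $v_2(1,0)=v_2(0,1)=0$ — so it genuinely lies outside the class handled by the positive results, as the proposition requires.)

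First I would compute the benchmark. On a realization, let $k=\max\{i:\ s_1=\dots=s_i=1\}$ (with $k=0$ if $s_1=0$); then $\mathrm{OPT}=2^k$ for $k\ge 1$. Since $\Pr[k=i]=2^{-(i+1)}$ for $1\le i\le n-1$ and $\Pr[k=n]=2^{-n}$, a one-line computation gives
\[
\E[\mathrm{OPT}] \;=\; \sum_{i=1}^{n-1} 2^i\cdot 2^{-(i+1)} + 2^n\cdot 2^{-n} \;=\; \frac{n-1}{2}+1 \;=\; \frac{n+1}{2} \;=\; \Omega(n).
\]

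Next I would show that every deterministic online algorithm earns expected value exactly $1$ on this instance. The key point: the event ``the first $t$ signals are all $1$'' pins the prefix $(s_1,\dots,s_t)$ to a single point, and an online algorithm's decision at time $t$ depends only on that prefix; hence ``the algorithm stops at time $t$ having seen an all-ones prefix'' is a yes/no condition, and it can be ``yes'' for at most one index $t^*$, because once the algorithm stops it cannot stop again (and it stops somewhere along the all-ones path, so such a $t^*\le n$ exists). Writing the expected value as $\sum_{t}2^t\Pr[\text{stop at }t,\ s_1=\dots=s_t=1]$, every term vanishes except $t=t^*$, which contributes $2^{t^*}\cdot 2^{-t^*}=1$. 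By Yao's minimax principle~\cite{Yao77} — a randomized algorithm is a mixture of deterministic ones, each earning exactly $1$ against this fixed distribution over instances — the bound extends to randomized algorithms. Dividing, the competitive ratio is at least $(n+1)/2=\Omega(n)$.

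The one step that needs care is the algorithm upper bound: a naive argument that bounds each summand $2^t\Pr[\cdots]\le 2^t\cdot 2^{-t}=1$ and then sums over $t$ yields only the useless bound $n$. The actual content is that the \emph{nested} all-ones events force the algorithm to commit its stopping behaviour along the all-ones path to a single index, so only one summand survives. Everything else — monotonicity of the $v_i$, the benchmark computation, and the Yao reduction — is routine.
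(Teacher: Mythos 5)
Your proposal is correct and essentially reproduces the paper's argument: the paper uses signals drawn i.i.d.\ from $\{0,2\}$ with $v_i(\sigs_{[i]})=\prod_{j\le i}s_j$, which is exactly your indicator construction under the substitution $2\mapsto 1$, and then invokes the same $(n+1)/2$-vs-$1$ calculation and Yao reduction from \Cref{thm:prophet_farsighted_hard}. One small imprecision: a deterministic algorithm need not stop along the all-ones path, so the correct statement is $\E[\mathrm{ALG}]\le 1$ rather than $=1$; this does not change the $\Omega(n)$ conclusion.
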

\begin{proof}
Assume that all signals $s_i$ are draw i.i.d. and uniformly from $\{0, 2\}$, and that agent $i$ has a value equal to $v_i(\sigs_{[i]}) = \prod_{j=1}^i s_j$. This exactly reproduces the construction of \Cref{thm:prophet_farsighted_hard}, where the value of an agent double at each step, until some unpredictable time when it drops to zero for all remaining agents. Using the same argument as in the previous section, no algorithm can guarantee better than a $\Omega(n)$ approximation.
\end{proof}

A model which addresses the impossibilities raised by \Cref{thm:prophet_farsighted_hard,prop:prophet_general_hardness} was previously considered in the work of  \citet{brunel1979parier}, who showed (non-constructively) that there exists a $2(1+\sqrt{3}) \approx 5.46$ approximation when agents have subadditive valuations that do not depend on future signals. In this section, we provide a simple stopping rules that guarantee an improved ratio of $4$.

While our algorithm's main idea is based on the classic threshold approach from prophet inequalities (where the threshold is set to half the expected optimum value), dealing with interdependent values even in this simplest setting we consider is not without challenges. In particular, opposed to classic prophet inequalities, the simple threshold policy does not work.
Consider the following instance: all agents except for agent $n$ have valuation equal to $s_1+1$. Agent one's distribution is such that $s_1=0$ with probability $(1-\varepsilon)$, and $s_1=\tfrac {1}{\varepsilon}$ with very small probability $\varepsilon$.
Agent $n$ has valuation equal to $n\cdot s_1$.
The expected maximum is therefore $n$, half of which will be set as the algorithm's threshold value.
Consider now the rare case that indeed, $s_1$ has nonzero value (due to $s_1=\tfrac{1}{\varepsilon}$ being drawn). Every agent beats the threshold of $\tfrac{n}{2}$, and the algorithm will realize only a value of $\tfrac{1}{\varepsilon}$ while the optimum is by a factor $n$ higher.
Given that for the above distribution, the standard threshold does not obtain constant approximation, we might attempt to use a higher threshold instead. However, consider instead the case that $D_1$ always returns zero. Now, any constant-approximative threshold algorithm must choose a threshold $\leq 1$.
We therefore need to take care of such phenomena, caused by the interdependence of values, in our algorithms.

We now present a simple stopping rule which achieves a $4$-approximation, only losing a factor $2$ compared to the standard setting with independent values. The main idea to resolve issues caused by interdependence is  that, at any point of the algorithm, we can skip the current agent if her value is surely smaller than any one of the future agents, given the signals we observed. {In particular, the algorithm stops at time $t$ if and only if the value of agent $t$ (i.e., $v_t(\sigs_{[t])}$) satisfies the following two conditions: (1) it is at least as much as the threshold $X = \E[OPT]/2$, and (2) it is no worse than the current lower bound on the value of the future agents $i>t$ (i.e., $v_i(s_{[t]})$).}

\begin{algorithm}
\quad Set threshold
$$X = \E[\max_{i\in[n]} v_i(\sigs_{[i]})]/2.$$
\,\,\,Stop at the first time $t$ such that:
\begin{itemize}
\item $v_t(\sigs_{[t]}) \geq X$, and 
\item $v_t(\sigs_{[t]}) \geq v_i(\sigs_{[t]})$ for all $i > t$
\end{itemize}
\caption{4-approximation algorithm with myopic agents.}
\label{algo:prophet_myopic_algo}
\end{algorithm}

\begin{theorem}\label{thm:prophet_myopic_algo}
\Cref{algo:prophet_myopic_algo} is $4$-competitive. That is, it obtains an expected (myopic) welfare of at least
$\frac{1}{4}\E[\max_{i\in [n]} v_i(\sigs_{[i]})]$.
\end{theorem}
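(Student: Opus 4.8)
The plan is to adapt the classical Samuel-Cahn style analysis of the single-threshold prophet inequality, tracking the extra complication introduced by the second stopping condition. Write $X = \tfrac12\E[\max_i v_i(\sigs_{[i]})]$ and let $\tau$ be the (random) stopping time of the algorithm, with $\tau = \infty$ if the algorithm never stops. The standard decomposition is to write the expected welfare as $X\cdot\Pr[\tau < \infty] + \E[(v_\tau(\sigs_{[\tau]}) - X)^+]$, where the first term accounts for the threshold being met and the second for the ``overshoot''. I would lower bound each term by $\tfrac12 X = \tfrac14\E[\mathrm{OPT}]$, where $\mathrm{OPT} = \max_i v_i(\sigs_{[i]})$.

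For the overshoot term, the key observation is that whenever some agent $i$ has $v_i(\sigs_{[i]}) \geq X$, the algorithm stops at or before time $i$: indeed, on the arrival of the \emph{first} such agent $i$, condition (1) holds, and condition (2) holds as well because subadditivity/monotonicity forces $v_j(\sigs_{[i]}) \le v_i(\sigs_{[i]})$ is \emph{not} automatic — here is where I must be careful. The second condition compares $v_i(\sigs_{[i]})$ against $v_j(\sigs_{[i]})$ for future $j$; this is a lower bound on the future agent's eventual value $v_j(\sigs_{[j]}) \ge v_j(\sigs_{[i]})$ by monotonicity. So the right way to argue is: if the algorithm has not stopped before the time $t^\star$ at which $\mathrm{OPT}$ is realized (i.e. $t^\star = \arg\max_i v_i(\sigs_{[i]})$), then at time $t^\star$ both conditions hold (condition (2) holds because $v_{t^\star}(\sigs_{[t^\star]}) = \mathrm{OPT} \ge v_j(\sigs_{[j]}) \ge v_j(\sigs_{[t^\star]})$ for all $j$), so the algorithm stops by $t^\star$ and collects at least $\max(X, \text{something})$. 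Combining, $\E[(v_\tau(\sigs_{[\tau]}) - X)^+] \ge \E[(\mathrm{OPT} - X)^+ \cdot \ind[\tau \ge t^\star]] $, and on the complementary event the algorithm already stopped with value $\ge X$; a short case analysis (as in Samuel-Cahn) then yields $\E[\text{welfare}] \ge \E[\mathrm{OPT}] / 2 \cdot \ldots$ — more precisely one shows $\E[(v_\tau - X)^+] \ge \E[(\mathrm{OPT}-X)^+]$ using that the event $\{\tau < t^\star\}$ is exactly the event that some earlier agent crossed $X$, on which $(\mathrm{OPT}-X)^+$ is ``already paid for'' by the threshold term.

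For the threshold term $X \cdot \Pr[\tau < \infty]$, I need $\Pr[\tau < \infty] \ge \tfrac12$. The algorithm fails to stop only if no agent ever simultaneously satisfies both conditions. By the argument above, the agent $t^\star$ achieving the max always satisfies condition (2) when reached (its value dominates every $v_j(\sigs_{[t^\star]})$), so the only way to not stop is that $\mathrm{OPT} < X$, OR that some earlier agent already triggered a stop — but that's stopping, not failing. Hence $\Pr[\tau = \infty] \le \Pr[\mathrm{OPT} < X]$. Now I need $\Pr[\mathrm{OPT} < X] \le \tfrac12$, i.e. $\Pr[\mathrm{OPT} \ge X] \ge \tfrac 12$ where $X = \E[\mathrm{OPT}]/2$; this is \emph{not} true for a general nonnegative random variable (Markov gives the wrong direction). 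This is the main obstacle, and it is exactly the subtlety flagged in the paragraph before the algorithm. The resolution must use subadditivity: I expect to split $\mathrm{OPT}$ into the contribution of ``own signal'' versus ``earlier signals'' and argue that the relevant part of the benchmark is captured. Concretely, using the Samuel-Cahn trick properly one does not need $\Pr[\mathrm{OPT}\ge X]\ge\tfrac12$ in isolation; instead one bounds $\E[\text{welfare}] \ge X\Pr[\tau<\infty] + \E[(\mathrm{OPT}-X)^+\ind[\tau=\infty]]$ and then, since on $\{\tau=\infty\}$ we showed $\mathrm{OPT} < X$ so $(\mathrm{OPT}-X)^+ = 0$ there, that route collapses — so one genuinely needs the two-term split with the overshoot measured on $\{\tau \ge t^\star\} \supseteq$ a useful event. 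The clean statement is: $\E[\text{welfare}] \ge \E[\min(\mathrm{OPT}, X)] + \E[(\mathrm{OPT}-X)^+\cdot\ind[\text{no earlier crossing}]] \ge \E[\min(\mathrm{OPT},X)] \ge \ldots$, and then I bound $\E[\min(\mathrm{OPT},X)]$ from below by $\tfrac14\E[\mathrm{OPT}]$ using $\E[\min(\mathrm{OPT},X)] \ge \tfrac12\E[\mathrm{OPT}] - \tfrac12\E[(\mathrm{OPT}-X)^+]$ together with a separate bound on the overshoot — this circularity is resolved exactly as in the classical proof, and I would write it out carefully, expecting the subadditivity hypothesis to enter only in establishing that condition (2) is ``harmless'', namely that skipping agent $t$ because some future agent has a provably-larger value never loses us the true optimum. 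That last point — that the online lower bounds $v_i(\sigs_{[t]})$ used in condition (2) are monotone and hence the algorithm never skips the eventual maximizer — is the crux, and it relies only on monotonicity of the $v_i$, with subadditivity reserved for the mechanism/secretary variants; I would double-check whether subadditivity is in fact needed here at all.
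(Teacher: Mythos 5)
Your proposal has the right preliminary observations (the eventual myopic maximizer $t^\star$ always satisfies condition (2) when reached, so the stopping time $\tau$ satisfies $\tau \le t^\star$ or $\tau = \infty$, and $\tau = \infty$ exactly when $\mathrm{OPT} < X$), and these match the first steps of the paper's proof. But the core of your plan—a Samuel--Cahn decomposition $\E[\mathrm{ALG}] = X\cdot\Pr[\tau<\infty] + \E[(v_\tau - X)^+]$ with each piece bounded below by $\tfrac14\E[\mathrm{OPT}]$—does not go through, and you do not close the gap. The Samuel--Cahn machinery relies on the algorithm stopping at the \emph{first} agent whose value crosses $X$, so that the overshoot term can be rewritten via $\E[(v_i-X)^+\ind[\tau\ge i]] = \Pr[\tau\ge i]\cdot\E[(v_i-X)^+]$ using independence. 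Here condition (2) can cause the rule to skip past a threshold-crossing agent, so that identity fails, and the first term $X\Pr[\tau<\infty]=X\Pr[\mathrm{OPT}\ge X]$ cannot by itself be bounded below by $\tfrac12 X$ (as you correctly note, Markov goes the wrong way). After flagging the circularity you never break it: the event $\{\tau<t^\star\}$ is precisely where the algorithm may collect far less than $\mathrm{OPT}$, and your proposal offers no bound for the ``lost'' value $v_{t^\star}(\sigs_{[t^\star]}) - v_\tau(\sigs_{[\tau]})$ on that event.

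The missing step is the one the paper is built around: \emph{use subadditivity to split $\mathrm{OPT}$ along the stopping time}. Writing $I=t^\star$ and $T=\tau$, subadditivity gives $v_I(\sigs_{[I]}) \le v_I(\sigs_{[T]}) + v_I(\sigs_{[T+1,I]})$ on $\{T<I\}$. Condition (2) then bounds the first summand by $v_T(\sigs_{[T]})$ (the algorithm's value), and for the second summand one uses that the event $\{T=t\}$ is measurable with respect to $\sigs_{[t]}$, hence independent of $\sigs_{[t+1,j]}$; this lets one factor the expectation and bound it by $\E[\max_{j>t} v_j(\sigs_{[t+1,j]})] \le \E[\mathrm{OPT}] = 2X$. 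Assembling these pieces (plus the trivial bound on $\{T=\infty\}$) yields $\E[\mathrm{OPT}]\le X + 2\E[\mathrm{ALG}]$, i.e.\ the $4$-approximation. Note that your closing suspicion—that subadditivity might be dispensable and monotonicity might suffice—is incorrect: the paper's \Cref{prop:prophet_general_hardness} shows that without subadditivity no algorithm beats $\Omega(n)$ in this exact (prophet, myopic) setting, so subadditivity is not merely a convenience for the mechanism or secretary variants but essential already here.
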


\begin{proof}
    Let $I\in \text{argmax}_i v_i(\sigs{[i]})$ be the random variable that denotes the index of the maximum value agent and let $T\in \{1,\ldots,n,\infty\}$ be the stopping time of \Cref{algo:prophet_myopic_algo}. 
    Notice that either the algorithm stops at $T\le I$ or the algorithm does not stop, i.e. $T=\infty$. This is because the algorithm does not stop by time $I$ only if $v_I(\sigs_{[I]}) < X$, which implies all $v_i(\sigs_{[i]}) < X$. By using the subadditivity of $v_I$, we therefore have for any fixed arrival order and fixed realization of the $n$ signals
    \begin{align*}
v_I(\sigs_{[I]}) &\leq \ind[T = \infty] \cdot v_I(\sigs_{[I]})
\\
&+\ind[T < \infty] \cdot v_I(\sigs_{[T]})\\
&+ \ind[T < I]\cdot v_I(\sigs_{[T+1,I]}).
\end{align*}
Recall that, if the algorithm did not stop then $v_I(\sigs_{[I]})<X$, thus bounding the first term as
\begin{equation}\label{eq:algo1-doesn't-stop}
    \ind[T = \infty] \cdot v_I(\sigs_{[I]}) < \ind[T = \infty]\cdot X.
\end{equation}
However, if the algorithm stops then $v_T(\sigs_{[T]}) \ge v_I(\sigs_{[T]})$, thus bounding the second term as
\begin{equation}\label{eq:algo1-stops-easy}
    \ind[T < \infty] \cdot v_I(\sigs_{[T]}) \le \ind[T < \infty] \cdot v_T(\sigs_{[T]})
\end{equation}
Finally, to deal with the last term we bound it in expectation as follows
\begin{align}\label{eq:algo1-stops}
\E[\ind[T < I]\cdot v_I(\sigs_{[T+1,I]})]
&\leq 
\sum_{t=1}^n \E[\ind[T=t]\cdot \max_{j>t} v_j(\sigs_{[t+1,j]})]\notag\\
& = \sum_{t=1}^n \E[\ind[T=t]]\cdot \E[\max_{j>t} v_j(\sigs_{[t+1,j]})] \notag\\
&\le \Pr[T<\infty]\cdot 2X \notag\\
&\le \Pr[T<\infty]\cdot X + \E[\ind[T<\infty]\cdot v_T(\sigs_{[T]})],
\end{align}
where the first inequality follows by {the law of total probability} and using the fact that $I > t$ to upper-bound $v_I(\sigs_{[t+1,I]})$, the equality observes that stopping at time $t$ is independent of all signals after $t$, the next inequality simply uses the definition of $X$, and finally we obtain the last inequality by observing that $v_T(\sigs_{[T]}) \ge X$ whenever the algorithm stops.

Overall by putting together \Cref{eq:algo1-doesn't-stop,eq:algo1-stops-easy,eq:algo1-stops} we have the following bound on $\E[v_I(\sigs_{[I]})]$
\begin{align*}
\E[v_I(\sigs_{[I]})] &\leq \Pr[T = \infty] \cdot X +\Pr[T < \infty] \cdot X\\
&~+2\cdot \E[\ind[T < \infty] \cdot v_T(\sigs_{[T]})] \\
&\leq X + 2\cdot \E[\ind[T < \infty] \cdot v_T(\sigs_{[T]})].
\end{align*}

Observe that the expected welfare of the algorithm is exactly
$\E[ALG] = \E[\ind[T<\infty]\cdot v_T(\sigs_{[T]})]$
and by definition of $X$ it holds
\[
\E[v_I(\sigs_{[I]})] \leq \frac{1}{2}\cdot\E[v_I(\sigs_{[I]})] +2\cdot \E[ALG],
\]
which simplifies to
\[
\frac{1}{2}\cdot\E[v_I(\sigs_{[I]})]\leq 2\cdot \E[ALG],
\]
yielding a $4$-approximation.
\end{proof}
%\rrcomment{added text}
We remark that the above analysis is indeed crucially fueled by the idea of evaluating all (even future) agents' values on the current set of signals in every step, and only stopping if the current agent is the so-far maximum (including those who haven't arrived).
This is what allows us to relate the value of the maximum agent to that selected by the algorithm. Moreover, it nicely illustrates the importance of public valuation functions: in case they are private, the above counterexample cannot be circumvented since there is no way to identify the presence of the \emph{better} agent before arrival. 

%%%%%%%%%%%%%%%%%%%%%%%%%%%%%%%%%%%%%%%%%%%%%%%%%%%%%%%%%%%%%%%%%%%%%%%%%
\subsection{Mechanism with myopic agents}
\label{sec:prophet_myopic_mech}

In the previous section, we presented a simple $4$-approximation in the algorithmic setting, i.e. without considering the agents' incentives.
Next, we show how to build a truthful stopping rule (monotone in each agent's signal) which achieves an $8$-approximation, losing an extra factor of $2$ compared to the non-strategic setting.  

Recall that in \Cref{algo:prophet_myopic_algo}, even if the value of agent $t$ exceeded the threshold $X$, we did not stop if a future agent was obviously (i.e. on the currently known set of signals) better. However, since the future agents are evaluated (among others) on the signal $s_t$, agent $t$ may have an incentive to misreport her signal. Therefore, the translation into an incentive-compatible mechanism is no longer immediate (as it is for pure threshold strategies). \dmedit{It is well-known that an allocation rule can be truthfully implemented if and only if each agent $i$'s allocation is monotone non-decreasing in her signal $s_i$ (see \Cref{lem:truthful}). Hence, under the single crossing assumption,\footnote{The valuations satisfy the single crossing conditions if for all $i,j$, $s_i$, $s_{-i}$, we have $\partial_i v_i(\sigs)/\partial s_i \ge \partial_i v_j(\sigs)/\partial s_i$. } we have a truthful mechanism that is a $4$-approximation by charging appropriate payments with \Cref{algo:prophet_myopic_algo}.

\begin{corollary}\label{cor:prophet_SC}
If the valuation functions satisfy the single crossing condition, then \Cref{algo:prophet_myopic_algo} provides an EPIC mechanism by charging price $p_t$ for the selected agent $t$, where
\[
p_t = \max\{ X, \inf\{v_t(s_{[t-1]},s'_t) | s'_t\ge 0 \text{ s.t. } v_t(s_{[t-1]},s'_t) \ge v_i(s_{[t-1]},s'_t) \text{ for all } i>t \}\}.
\]
\end{corollary}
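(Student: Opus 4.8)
The plan is to verify the two ingredients needed to invoke \Cref{lem:truthful}, and then to identify the payment produced there with the price $p_t$ in the statement. First I would show that, under the single crossing condition, the allocation rule induced by \Cref{algo:prophet_myopic_algo} is monotone non-decreasing in each agent's own signal $s_t$. Recall that agent $t$ is selected exactly when (i) no earlier agent stopped the algorithm, (ii) $v_t(\sigs_{[t]}) \geq X$, and (iii) $v_t(\sigs_{[t]}) \geq v_i(\sigs_{[t]})$ for all $i > t$. Condition (i) does not depend on $s_t$ at all (earlier stopping is determined by $\sigs_{[t-1]}$), so I only need that the event ``(ii) and (iii) hold'' is an up-set in $s_t$. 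For (ii) this is immediate from monotonicity of $v_t$. For (iii), fix $i>t$ and consider $f_i(s_t) := v_t(\sigs_{[t-1]}, s_t, 0, \dots, 0) - v_i(\sigs_{[t-1]}, s_t, 0, \dots, 0)$; the single crossing condition gives $\partial f_i/\partial s_t = \partial_t v_t - \partial_t v_i \geq 0$, so $f_i$ is non-decreasing in $s_t$, hence $\{s_t : f_i(s_t) \geq 0\}$ is an up-set, and the intersection over $i>t$ of up-sets is an up-set. Combined with (ii), the selection event is monotone in $s_t$, so the allocation rule is monotone.

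Second, I would apply \Cref{lem:truthful} (myopic case): monotonicity yields an EPIC mechanism, with the winning agent $t$ charged
\[
p_t = \inf\{ v_t(\sigs_{[t-1]}, s_t') \mid s_t' \geq 0 \text{ such that } x_t(\sigs_{[t-1]}, s_t') = 1\}.
\]
The remaining task is to rewrite $\{s_t' : x_t(\sigs_{[t-1]}, s_t') = 1\}$ explicitly. Conditioning on the algorithm reaching time $t$ (i.e.\ condition (i), which holds since $t$ is actually selected on the true profile), this set is $\{s_t' : v_t(\sigs_{[t-1]}, s_t') \geq X\} \cap \bigcap_{i>t}\{s_t' : v_t(\sigs_{[t-1]}, s_t') \geq v_i(\sigs_{[t-1]}, s_t')\}$. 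Taking the infimum of $v_t(\sigs_{[t-1]}, \cdot)$ over this intersection, and using that $v_t(\sigs_{[t-1]}, \cdot)$ is non-decreasing so that the infimum over the threshold constraint alone is $X$ (or its effective lower bound), I get that $p_t$ equals the maximum of $X$ and $\inf\{ v_t(\sigs_{[t-1]}, s_t') \mid v_t(\sigs_{[t-1]}, s_t') \geq v_i(\sigs_{[t-1]}, s_t') \ \forall i>t\}$, which is exactly the displayed expression. Finally, since the allocation rule is unchanged from \Cref{algo:prophet_myopic_algo}, its welfare guarantee of a $4$-approximation (\Cref{thm:prophet_myopic_algo}) carries over verbatim to the mechanism.

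The main obstacle I expect is the second step — more precisely, the bookkeeping around the infimum. One must be careful that the ``critical value'' from \Cref{lem:truthful} is the infimum of $v_t(\sigs_{[t-1]}, s_t')$ over the \emph{whole} winning set, not the infimum of the $s_t'$ themselves, and that under monotonicity of $v_t(\sigs_{[t-1]}, \cdot)$ these two notions of ``smallest'' align so that the winning set (an up-set in $s_t'$) has its image under $v_t(\sigs_{[t-1]},\cdot)$ equal to $[p_t, \infty)$ up to boundary issues. Writing the winning set as an intersection of two up-sets and pushing the infimum through the ``$\max$'' of the two corresponding thresholds is where the single crossing hypothesis does its real work (without it, the set $\bigcap_{i>t}\{f_i \geq 0\}$ need not be an up-set and the whole argument collapses — indeed the counterexample discussed before the corollary shows monotonicity genuinely fails in general). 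The rest is routine.
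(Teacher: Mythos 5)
Your proposal is correct and follows essentially the same route the paper takes implicitly: the corollary is obtained by observing that the single-crossing condition makes the selection event an up-set in $s_t$ (conditions (ii), (iii) each define up-sets, (i) is $s_t$-independent) and then invoking \Cref{lem:truthful} for the critical payment. The identification of that payment with $\max\{X,\cdot\}$, via monotonicity of $v_t(\sigs_{[t-1]},\cdot)$ on the intersection of two up-sets, is exactly the bookkeeping the paper leaves to the reader, and your remark about boundary/discontinuity issues correctly flags the only place where care is needed.
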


However, without the single crossing assumption, we cannot obtain an EPIC mechanism using \Cref{algo:prophet_myopic_algo}.} To overcome this, we use the power of randomization. In particular, if an agent $t$'s value exceeds the threshold $X=\E[OPT]/2$, then with probability $1/2$ we accept her (and stop), and with probability $1/2$ we decide to accept a future agent with highest estimated value using the signals observed so far, $\sigs_{[t]}$. 

\begin{algorithm}
\quad Set threshold
$$X := \E[\max_{i\in [n]} v_i(\sigs_{[i]})]/2.$$
\,\,\,Let $T$ be the first time $t$ such that $v_t(\sigs_{[t]}) \geq X$.
\begin{itemize}
\item With probability $1/2$, stop at time $T$ (and charge agent $T$ a price of $X$).
\item With probability $1/2$, wait and stop at time $\text{argmax}_{i>t} v_i(\sigs_{[t]})$.
\end{itemize}
\caption{8-approximation mechanism with myopic agents.}
\label{algo:prophet_myopic_mechanism}
\end{algorithm}

\begin{theorem}\label{thm:prophet_myopic_mechanism}
\Cref{algo:prophet_myopic_mechanism} is a 8-approximation, that is, it obtains an expected (myopic) welfare of at least
$\frac{1}{8}\E[\max_{i\in[n]} v_i(\sigs_{[i]})]$.
\end{theorem}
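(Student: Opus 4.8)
The plan is to follow the structure of the proof of \Cref{thm:prophet_myopic_algo}, while additionally tracking how the internal coin flip of \Cref{algo:prophet_myopic_mechanism} distributes the realized value across its two branches. Write $M:=\max_{i\in[n]}v_i(\sigs_{[i]})$, fix $I\in\text{argmax}_i v_i(\sigs_{[i]})$, and let $T\in\{1,\dots,n,\infty\}$ be the first time $t$ with $v_t(\sigs_{[t]})\ge X$ (and $T=\infty$ if no such time exists). As in \Cref{thm:prophet_myopic_algo}, the starting point is that $T=\infty$ holds exactly when $M<X$, while on the event $\{T<\infty\}$ we have $T\le I$, since agent $I$ attains value at least $X$ and hence the stopping time does not exceed $I$.

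First I would upper bound $M$. On $\{T=\infty\}$ use $M<X$. On $\{T<\infty\}$, apply the subadditivity of $v_I$ to the partition $[I]=[T]\cup[T+1,I]$ to obtain $M\le v_I(\sigs_{[T]})+v_I(\sigs_{[T+1,I]})$. Split the first summand according to whether $T=I$ (then $v_I(\sigs_{[T]})=v_T(\sigs_{[T]})$) or $T<I$ (then $v_I(\sigs_{[T]})\le\max_{j>T}v_j(\sigs_{[T]})$, as $I$ is one of the indices $j>T$), which gives
\[
\ind[T<\infty]\,v_I(\sigs_{[T]})\ \le\ \ind[T<\infty]\,v_T(\sigs_{[T]})\ +\ \ind[T<I]\,\max_{j>T}v_j(\sigs_{[T]}).
\]
For the second summand, exactly as in \Cref{thm:prophet_myopic_algo}, bound $v_I(\sigs_{[T+1,I]})\le\max_{j>T}v_j(\sigs_{[T+1,j]})$ and exploit that $\{T=t\}$ is a function of $\sigs_{[t]}$, hence independent of $(s_{t+1},\dots,s_n)$, together with monotonicity, to get $\E[\ind[T<I]\,v_I(\sigs_{[T+1,I]})]\le 2X\cdot\Pr[T<\infty]$. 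Set $A:=\E[\ind[T<\infty]\,v_T(\sigs_{[T]})]$ and $B:=\E[\ind[T<\infty]\,\max_{j>T}v_j(\sigs_{[T]})]$. Taking expectations and using $v_T(\sigs_{[T]})\ge X$ on $\{T<\infty\}$ to replace one copy of $X\cdot\Pr[T<\infty]$ by $A$ (and combining $X\cdot\Pr[T=\infty]+X\cdot\Pr[T<\infty]=X$), everything should collapse to $\E[M]\le X+2A+B$; since $\E[M]=2X$ this yields $X\le 2A+B\le 2(A+B)$.

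It remains to lower bound the mechanism's welfare by $\tfrac12(A+B)$. Conditioned on $\{T<\infty\}$: with probability $\tfrac12$ the mechanism stops at $T$ and collects $v_T(\sigs_{[T]})$; with probability $\tfrac12$ it stops at $J:=\text{argmax}_{j>T}v_j(\sigs_{[T]})$ and collects the myopic value $v_J(\sigs_{[J]})\ge v_J(\sigs_{[T]})=\max_{j>T}v_j(\sigs_{[T]})$, using monotonicity and $[T]\subseteq[J]$. Hence $\E[ALG]\ge\tfrac12 A+\tfrac12 B$, so $X\le 2(A+B)\le 4\,\E[ALG]$, which together with $X=\tfrac12\E[M]$ gives the claimed $8$-approximation. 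To support calling this a mechanism I would finally note that the induced allocation is monotone in each agent's own signal --- in the ``stop at $T$'' branch by the classical threshold argument, and in the ``wait'' branch because the winner's index is fixed by $\sigs_{[T]}$ with $T$ strictly before the winner and so does not depend on the winner's own signal at all --- so \Cref{lem:truthful}, together with the fact that a lottery over deterministic EPIC mechanisms is EPIC, furnishes the prompt payments. The only real subtlety is keeping the coin flip independent of the signal realizations, so that each branch contributes exactly half of the matching term, and noticing that $\max_{j>T}v_j(\sigs_{[T]})$ --- the quantity the ``wait'' branch captures --- is exactly what shows up when bounding $v_I(\sigs_{[T]})$ in the case $T<I$; the remainder is a rearrangement of inequalities already in the proof of \Cref{thm:prophet_myopic_algo}.
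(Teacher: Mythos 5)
Your proof is correct and follows essentially the same route as the paper's: the same subadditivity split of $v_I(\sigs_{[I]})$ across $[T]$ and $[T+1,I]$, the same independence-of-future-signals bound on the tail term, and the same accounting that charges the first-summand pieces to the two branches of the coin flip (stop at $T$ versus jump to $\text{argmax}_{j>T}v_j(\sigs_{[T]})$). Your explicit bookkeeping via $A$ and $B$ is a slightly cleaner packaging of the paper's identical estimate $\E[\ind[T<\infty]v_T(\sigs_{[T]})]+\E[\ind[T<I]v_I(\sigs_{[T]})]\le 2\,\E[ALG]$, and your inline EPIC remark matches the paper's separate \cref{lem:prophet-truthful}.
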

\begin{proof}
The proof follows the same principle as that for the algorithmic setting. Let $I \in \text{argmax}_{i}v_i(\sigs_{[i]})$ be a random variable (index of a maximum agent value), and let $T \in \{1, \dots, n, \infty\}$ be the first time $t$ such that $v_t(\sigs_{[t]}) \geq X$. Note that $T$ is not the stopping time as the algorithm stops only later with probability $1/2$.
By construction, it holds that either $T \leq I$, or $T = \infty$, because if $v_I(\sigs_{[I]}) < X$ then $v_t(\sigs_{[t]}) < X$ for all $t$. Thus, using the subadditivity of $v_I$, we can write
\begin{align*}
v_I(\sigs_{[I]}) &\leq \ind[T = \infty] \cdot v_I(\sigs_{[I]})
\\
&+\ind[T < \infty] \cdot v_I(\sigs_{[T]})\\
&+ \ind[T < I]\cdot v_I(\sigs_{[T+1,I]}).
\end{align*}
If the algorithm did not stop, then $v_I(\sigs_{[I]}) < X$, which bounds the first term as
\[
\ind[T = \infty] \cdot v_I(\sigs_{[I]}) \le \ind[T = \infty] \cdot X
\]
We next bound the last term in expectation,
\begin{align*}
\E[\ind[T < I]\cdot v_I(\sigs_{[T+1,I]})]
&\leq 
\sum_{t=1}^n \E[\ind[T=t]\cdot \max_{j>t} v_j(\sigs_{[t+1,j]})]\\
& = \sum_{t=1}^n \E[\ind[T=t]]\cdot \E[\max_{j>t} v_j(\sigs_{[t+1,j]})] \\
&\le \Pr[T<\infty]\cdot 2X\\
&\le \Pr[T<\infty]\cdot X + \E[\ind[T<\infty] \cdot v_T(\sigs_{[T]})],
\end{align*}
where the first inequality follows by the law of total probability and using that $I>t$ given $T=t$, the equality follows by the observation that stopping at time $t$ does not depend on future signals after $t$, the next inequality simply uses the definition of $X$ to upper-bound some expected values, and the last inequality follows because whenever the $T<\infty$ we have $v_T(\sigs_{[T]}) \ge X$ by definition.

Overall, this gives the following bound on the expected optimum $\E[v_I(\sigs_{[I]})]$
\begin{align*}
\E[v_I(\sigs_{[I]})] &\leq \Pr[T = \infty] \cdot X + \Pr[T < \infty] \cdot X\\
&+\E[\ind[T < \infty] \cdot v_I(\sigs_{[T]})]\\
&+ \E[\ind[T<\infty]\cdot v_T(\sigs_{[T]})].
\end{align*}
Now, observe that by design of the algorithm we have
\begin{equation}\label{eq:mech-prophets}
\E[\ind[T<\infty] \cdot v_T(\sigs_{[T]})] + \E[\ind[T < I] \cdot v_I(\sigs_{[T]})] \leq 2\cdot \E[ALG],
\end{equation}
where we simply bound $v_I(\sigs_{[T]}) \le  \max_{j > T} v_j(\sigs_{[T]})$ for $T < I$.

Thus, plugging in \cref{eq:mech-prophets} and using the definition of $X = \E[v_I(\sigs_{[I]})]/2$ we get
\begin{align*}
    \frac{1}{2}\cdot\E[v_I(\sigs_{[I]})]
   &\leq 2\cdot\E[ALG] +  \E[\ind [T= I] \cdot v_I(\sigs_{[I]})].
\end{align*}
Finally, we bound $\E[\ind [T= I] \cdot v_I(\sigs_{[I]})]$ by $2\cdot \E[ALG]$ to obtain
$$
\frac{1}{2}\cdot\E[v_I(\sigs_{[I]})] \le  4\cdot \E[ALG],
$$
resulting in an $8$-approximation.
\end{proof}

We next show that \Cref{algo:prophet_myopic_mechanism} is indeed a truthful mechanism.

\begin{lemma}\label{lem:prophet-truthful}
    \Cref{algo:prophet_myopic_mechanism} is an EPIC mechanism.
\end{lemma}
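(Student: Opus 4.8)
The plan is to verify EPIC via Lemma~\ref{lem:truthful}, for which it suffices to show that the allocation rule of \Cref{algo:prophet_myopic_mechanism} is monotone in each agent's signal $s_i$ (and that payments can be taken prompt, which follows from the myopic structure). Since the mechanism is randomized, I would argue monotonicity conditionally on the outcome of the internal fair coin flip, so that the whole mechanism is a lottery over two deterministic EPIC mechanisms (as noted at the end of \Cref{sec:preliminaries}). Fix agent $i$ and the signals $\sigs_{-i}$ of all other agents, and let $s_i' \ge s_i$; I must show that if $i$ is selected on input $(\sigs_{[i-1]}, s_i, \ldots)$ then $i$ is still selected on $(\sigs_{[i-1]}, s_i', \ldots)$, in each of the two branches.

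First consider the branch where we stop at time $T$, the first $t$ with $v_t(\sigs_{[t]}) \ge X$. Raising $s_i$ to $s_i'$ only affects the values $v_t(\sigs_{[t]})$ for $t \ge i$ (and it can only increase them, by monotonicity of the $v_t$). Hence the first crossing time $T$ can only move earlier or stay the same, and crucially: if $i$ was selected, then $T = i$, meaning $v_i(\sigs_{[i-1]}, s_i) \ge X$ while $v_t(\sigs_{[t]}) < X$ for all $t < i$ — but those values $v_t(\sigs_{[t]})$ with $t<i$ do not depend on $s_i$ at all, so they are unchanged, and $v_i(\sigs_{[i-1]}, s_i') \ge v_i(\sigs_{[i-1]}, s_i) \ge X$, so still $T = i$ and $i$ is selected. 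So in this branch $i$'s allocation is monotone (in fact it is the indicator that $i$ is the first to cross $X$, which is monotone in $s_i$).

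Next consider the branch where we wait: we stop at $\mathrm{argmax}_{j > T} v_j(\sigs_{[T]})$. Here agent $i$ is selected only if $i > T$ and $i$ achieves the maximum of $v_j(\sigs_{[T]})$ over $j > T$. The subtlety is that when $i > T$, the tie-breaking/selection is based on $v_j(\sigs_{[T]})$ evaluated at the signals up to time $T < i$, which do \emph{not} include $s_i$; hence agent $i$'s chance of being picked in this branch is completely independent of $s_i$ (as long as $T$ itself is unchanged, and $T$ depends only on $\sigs_{[T]}$, again not on $s_i$ since $T < i$). Therefore $i$'s allocation in this branch is a constant function of $s_i$, which is trivially monotone. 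I should also handle the boundary/degenerate possibility that $i = T$ in this branch — but if $i = T$ then $i$ is not selected in the waiting branch at all (we stop strictly after $T$), so again $i$'s allocation is $0$ regardless of $s_i$; and the event $\{i = T\}$ shifting as $s_i$ grows only moves probability mass from "$i=T$, not selected in this branch" toward the same, consistent with monotonicity. Combining the two branches, the (randomized) allocation $x_i$ is monotone non-decreasing in $s_i$ for every fixing of $\sigs_{-i}$ and every coin outcome, so by \Cref{lem:truthful} there exist (prompt, since agents are myopic) payments making each deterministic realization EPIC; the price $X$ charged in the stopping branch is exactly the threshold payment prescribed there, and the waiting branch charges $0$ to the winner. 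Hence \Cref{algo:prophet_myopic_mechanism} is EPIC.

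The main obstacle is the waiting branch: one has to notice that the potential non-monotonicity that killed the deterministic \Cref{algo:prophet_myopic_algo} (agent $t$ being skipped because a future agent looks better on $\sigs_{[t]}$, which \emph{includes} $s_t$) is precisely what is sidestepped here — in the waiting branch the comparison among future agents is done on signals up to $T$, and conditioned on reaching that branch, the agent in question has index $> T$, so her own signal never enters the decision. Making that conditioning-and-independence argument clean (in particular being careful that $T$ is a stopping time measurable w.r.t. $\sigs_{[T]}$ and that "agent $i$ selected in the waiting branch" is a function of $\sigs_{-i}$ alone on the event $i > T$) is the one place where care is needed; the rest is immediate from monotonicity of the $v_t$ and \Cref{lem:truthful}.
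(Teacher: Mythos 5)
Your proof is correct and takes essentially the same approach as the paper: condition on the internal coin flip, observe that in the stop-at-$T$ branch agent $i$'s allocation is the monotone event of being the first to cross $X$ (with $X$ a valid threshold price), and in the waiting branch the winner is determined entirely by $\sigs_{[T]}$ with $T<i$, so agent $i$'s own signal never enters. The paper states this same reasoning informally; you simply formalize it via \Cref{lem:truthful} and the branch-by-branch monotonicity check.
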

\begin{proof}
    We first observe that our mechanism is essentially a posted price mechanism with fixed price $X$, except with probability half we don't sell (and give it away for free to a future agent).  Note that, if an agent $i$ wins for free due to the random coin toss, this uses no information about $s_i$. If an agent $t$ wins because her value exceeds $X$, then she has no reason to misreport because decreasing her signal $s_t$ can only potentially make her lose and increasing the signal doesn't affect her allocation or price. Crucially, we use $s_t$ to estimate the future agents only after we decided to reject $t$ due to the random coin toss.
\end{proof}

\section{The Secretary Model}
We next consider the secretary model, where agents arrive in a random order. More formally, the valuation functions and signals are formed adversarially, then agents are shuffled uniformly at random and relabelled such that agent $t$ arrives at time $t$. In our model, the algorithm does not have access to the valuation functions of agents who have not arrived, but at any given time it can query the values of all agents so far on any subset of observed signals. Our goal is to design simple stopping rules, which guarantee a constant fraction of the maximum value, in expectation over the random order.

\dmedit{In fact, we prove a much stronger statement where even when considering only the myopic welfare of the online algorithm and mechanism (i.e., the value of the accepted agent is evaluated only on the signals observed so far), we prove a constant approximation to the farsighted benchmark (i.e., the maximum value when considering all signals). Since the valuation functions are monotone over the signals, we observe that any algorithm that obtains an $\alpha$-fraction of the farsighted benchmark also obtains an $\alpha$-fraction of the myopic benchmark. Moreover, the farsighted welfare of the algorithm is greater than or equal to the myopic welfare of the algorithm. 

In Section~\ref{sec:secretary_algo} we provide a simple stopping rule such that the (myopic) welfare is a $2e$-approximation to the maximum farsighted value in expectation. In Section~\ref{sec:secretary_mech} we extend our results to obtain a truthful mechanism such that the (myopic) welfare is a $4e$-approximation to the maximum farsighted value in expectation. These results imply a $2e$-approximate algorithm (without incentive constraints) and a $4e$-approximate truthful mechanism for both the myopic and farsighted settings. 

Before we present our stopping rules we discuss two  properties of interest.}
%\smcomment{this is common to both section, so I am putting it here, feel free to change it.}
One crucial property of secretary algorithms is that the probability of stopping is independent of the subset of agents who have arrived (but depends on the ordering of that subset). This is formalized in \Cref{lem:secretary_stopping}, which will be used in the analysis of \Cref{algo:secretary_farsighted_algo,algo:secretary_farsighted_mechanism}.
\begin{lemma}\label{lem:secretary_stopping}
In the secretary setting, a fixed set of agents $A$ arrive in a random order $a_1, \dots, a_n$. Consider an algorithm which stops at time $t$ if and only if
\begin{itemize}
    \item $t > k$, for some fixed integer $k\geq 0$, and
    \item $a_t = \texttt{best}(\{a_1, \dots, a_t\})$, where $\texttt{best}$ maps each subset $S\subseteq A$ to an agent $a\in S$.
\end{itemize}
Then the algorithm stops at time $T=t$ with probability equal to $\frac{k}{t(t-1)}$, and this is independent of the (random) set of agents who arrived on or before time $t$. More formally,
$$
\forall t > k,
\forall S\subseteq A\text{ such that }|S|=t,\qquad
\Pr[T = t\,|\,\{a_1, \dots, a_t\} = S] = \frac{k}{t(t-1)}.
$$
\end{lemma}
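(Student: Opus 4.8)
The plan is to condition on the set of the first $t$ arrivals and reduce to a clean inductive claim. Fix $t > k$ and $S \subseteq A$ with $|S| = t$, and work conditionally on $\{a_1,\dots,a_t\} = S$; under this conditioning the first $t$ arrivals are a uniformly random ordering of $S$, independently of the order of the remaining agents. The event $\{T=t\}$ is the intersection of (i) the rule fires at time $t$, which given the conditioning means exactly $a_t = \texttt{best}(S)$, and (ii) the rule did not fire at any time in $\{k+1,\dots,t-1\}$, i.e. $a_j \ne \texttt{best}(\{a_1,\dots,a_j\})$ for every such $j$ — an event determined by the ordering of $a_1,\dots,a_{t-1}$ alone. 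I would then condition once more on $a_t$: it is uniform on $S$, so $\Pr[a_t=\texttt{best}(S) \mid \{a_1,\dots,a_t\}=S] = 1/t$, and given $a_t = \texttt{best}(S)$ the prefix $(a_1,\dots,a_{t-1})$ is a uniformly random ordering of the $(t-1)$-element set $S\setminus\{\texttt{best}(S)\}$.

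The remaining ingredient is the following self-contained claim, which I would prove by induction on $r$: for a uniformly random ordering $(c_1,\dots,c_r)$ of any finite set $R$ with $r = |R| \ge k$, the probability that $c_j \ne \texttt{best}(\{c_1,\dots,c_j\})$ for every $j \in \{k+1,\dots,r\}$ equals $k/r$ — in particular, it does not depend on $R$ nor on the choice function $\texttt{best}$. The base case $r = k$ is immediate, the index range being empty and $k/k = 1$. For $r > k$, condition on the last element $c_r$, which is uniform on $R$: if $c_r = \texttt{best}(R)$ the target event fails (the rule fires at position $r$), whereas for each of the other $r-1$ values $c_r = c$ the target event reduces to the same event for the uniformly random ordering $(c_1,\dots,c_{r-1})$ of $R\setminus\{c\}$, a set of size $r-1\ge k$, which by the induction hypothesis has probability $k/(r-1)$. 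Summing over $c_r$ gives $(r-1)\cdot\tfrac1r\cdot\tfrac{k}{r-1} = k/r$.

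Combining the two steps, I apply the claim to $R = S\setminus\{\texttt{best}(S)\}$ with $r = t-1 \ge k$ (using $t > k$), so that event (ii) has conditional probability $k/(t-1)$, hence
$$
\Pr[T=t \mid \{a_1,\dots,a_t\}=S] \;=\; \frac1t\cdot\frac{k}{t-1} \;=\; \frac{k}{t(t-1)},
$$
with no dependence on $S$ whatsoever, which gives both the value and the claimed independence.

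I do not expect a serious obstacle here; the one point worth care is conceptual — checking that the generality of $\texttt{best}$ (an arbitrary choice function, with no monotonicity and no consistency across nested sets) does not break the familiar secretary-style computation. The induction makes this transparent: it uses only that whether the rule fires at time $j$ depends solely on $(a_1,\dots,a_j)$, and that deleting the last element of a uniformly random permutation leaves a uniformly random permutation of the rest. (As stated the lemma is intended for $k\ge 1$; for $k = 0$ the rule trivially stops at time $1$ and $k/(t(t-1))$ should be read as $0$ for $t>1$.)
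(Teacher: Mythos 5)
Your proof is correct and follows essentially the same approach as the paper's: both establish, by induction with the same conditioning-on-the-last-arrival step, that the probability of not firing up to a prefix of length $r$ is $k/r$ regardless of which $r$-element set has arrived, and then multiply by the $1/t$ chance that $a_t = \texttt{best}(S)$. Your restating of the inductive claim as a self-contained fact about uniform orderings of an abstract set $R$ (rather than about the stopping time $T$) is a clean cosmetic difference, and your aside on the $k=0$ edge case is a nice sanity check, but the argument is the same one.
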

\begin{proof}
We will show by induction on $t\geq k$ that
$$
\forall S\subseteq A\text{ such that }|S|=t,\qquad
\Pr[T > t\,|\,\{a_1, \dots, a_t\} = S] = \frac{k}{t}.
$$
This is trivially true at time $t=k$, because the stopping rule skips the first $k$ agents. Now, Let us compute the probability that $T > t+1$. For all subset $S\subseteq A$ of size $|S| = t+1$, we have that
$$
\Pr[T > t+1\,|\,\{a_1, \dots, a_{t+1}\} = S] =
\Pr[T > t\text{ and }a_{t+1}\neq\texttt{best}(S)\,|\,\{a_1, \dots, a_{t+1}\} = S].
$$
Using the law of total probability, we pick $a_{t+1}\in S$, and we obtain
$$
\Pr[T > t+1\,|\,\{a_1, \dots, a_{t+1}\} = S] = \frac{1}{|S|}\sum_{\substack{a\in S\\a\neq \texttt{best}(S)}} \Pr[T > t\,|\, \{a_1, \dots, a_t\} = S\setminus \{a\}]
$$
Finally, using the induction hypothesis, we have
$$
\Pr[T > t+1\,|\,\{a_1, \dots, a_{t+1}\} = S] = \frac{|S|-1}{|S|}\cdot \frac{k}{t} = \frac{k}{t+1},
$$
which concludes the induction. Next, for all $t > k$ and for all subset $S\subseteq A$ of size $|S| = t$ we write
\begin{align*}
\Pr[T = t\,|\,\{a_1, \dots, a_t\} = S] &= 
\Pr[T > t-1\text{ and }a_t = \texttt{best}(S)\,|\,\{a_1, \dots, a_t\} = S]\\
&= \frac{1}{|S|}\cdot \Pr[T > t-1\,|\,\{a_1, \dots, a_{t-1}\} = S\setminus\texttt{best}(S)],
\end{align*}
because with probability $1/|S|$ the best of $S$ arrives at time $t$.

We know that the probability that $T > t-1$ is $\frac{k}{t-1}$ from the above argument, and hence we conclude that $T=t$ with probability $\frac{k}{t(t-1)}$.
\end{proof}

A second important property of secretary settings is that given a stopping rule which asymptotically achieves a constant approximation when the number of agents becomes large, one can turn it into a stopping rule which achieves that exact constant for every value of $n$. Indeed, the stopping rule can pretend to observe many dummy agents with value $0$, which will never be selected but are only here to artificially increase the value of $n$.

\subsection{Algorithm with myopic or farsighted agents}\label{sec:secretary_algo}

Equipped with \Cref{lem:secretary_stopping}, we propose a simple stopping rule which achieves a $2e$-approximation, only losing a factor $2$ compared to the standard secretary setting. The main intuition is that after the sampling phase (first $\lfloor n/e\rfloor$ steps), the algorithm knows sufficiently many signals to have a good estimate of the agents true value (on all signals). 

\begin{algorithm}
\quad At step $t$, when agent $t$ arrives, stop if:
\begin{itemize}
\item $t > n/e$ (i.e., skip a constant fraction of agents), and
\item $v_t(\sigs_{[t]}) > v_i(\sigs_{[t]})$ for all $i<t$.
\end{itemize}
\caption{2e-approximation algorithm for the secretary model.}
\label{algo:secretary_farsighted_algo}
\end{algorithm}

\newcommand{\arrived}[1]{A_{#1}}
\begin{theorem}\label{thm:secretary_farsighted_algo}
\Cref{algo:secretary_farsighted_algo} is a $2e$-approximation. {That is, the expected (\dmedit{myopic}) value of the accepted agent is at least $\frac{1}{2e}\max_i v_i(\sigs)$.}
\end{theorem}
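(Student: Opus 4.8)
The plan is to follow the template of the classical secretary analysis, but replacing ``pick the running maximum'' with ``pick the running maximum value evaluated on the signals seen so far'', and then to pay the price of a factor $2$ exactly the way \Cref{thm:prophet_myopic_algo} does: by splitting the true optimum using subadditivity into a ``prefix'' part (captured directly by what the algorithm selects) and a ``suffix'' part (captured in expectation via the stopping probabilities). Let $A_t = \{a_1,\dots,a_t\}$ be the (random) set of the first $t$ arrivals, let $I$ index the farsighted-optimal agent, i.e.\ $v_I(\sigs) = \max_i v_i(\sigs)$, and let $k = \lfloor n/e\rfloor$ be the length of the sampling phase. I would apply \Cref{lem:secretary_stopping} with $\texttt{best}(S)$ defined as $\arg\max_{a\in S} v_a(\sigs_S)$ (the agent whose value, evaluated on the signals in $S$, is largest), so that the algorithm's stopping rule is exactly the one covered by that lemma; hence it stops at time $t$ with probability $\frac{k}{t(t-1)}$, independently of which set $S$ of size $t$ arrived first.

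First I would set up the decomposition. Condition on the realized signal vector $\sigs$ and the realized random order. When the algorithm stops at time $T=t$ accepting agent $a_t$, by the stopping rule $v_{a_t}(\sigs_{A_t}) \ge v_j(\sigs_{A_t})$ for every $j$ that has arrived by time $t$; in particular if $I$ has already arrived ($I \in A_t$) then $v_{a_t}(\sigs_{A_t}) \ge v_I(\sigs_{A_t})$. Using subadditivity of $v_I$ on the partition $[n] = A_t \,\dot\cup\, ([n]\setminus A_t)$ I would write, on the event $T=t$,
\[
v_I(\sigs) \;\le\; v_I(\sigs_{A_t}) + v_I(\sigs_{[n]\setminus A_t})
\;\le\; v_{a_t}(\sigs_{A_t}) \cdot \ind[I\in A_t] + v_I(\sigs_{[n]}) \cdot \ind[I\notin A_t] + v_I(\sigs_{[n]\setminus A_t})\cdot\ind[I\in A_t],
\]
and I would also need the case $T=\infty$, where I argue (as in the prophet proof) that nothing useful happens but this event can be absorbed: actually here it is cleaner to bound $v_I(\sigs) \le v_I(\sigs_{A_I}) + v_I(\sigs_{[n]\setminus A_I})$ is not available since $A_I$ is a running set; instead I would track the position $p$ of agent $I$ in the random order and split on whether the algorithm stopped before or after time $p$. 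So the accounting is: the algorithm's value $\E[ALG] = \E[v_{a_T}(\sigs_{A_T})\ind[T<\infty]]$ must be shown to be at least $\frac{1}{2e}\E[v_I(\sigs)]$.

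The crux is the two bounds on the two halves. \textbf{Suffix / ``$I$ not yet arrived'' part:} when the algorithm stops at time $t$ with $I\notin A_t$, I charge $v_I(\sigs)$ against $\E[ALG]$ using that the stopping value $v_{a_t}(\sigs_{A_t})$ is the max over $A_t$ and a symmetrization over which of the $t$ arrivals is $a_t$; combined with $\Pr[T=t] = \frac{k}{t(t-1)}$ and $\Pr[I\notin A_t \mid T=t]=\frac{n-t}{n}$ (independence from \Cref{lem:secretary_stopping}), the classical harmonic-sum computation $\sum_{t>k}\frac{k}{t(t-1)}\cdot\frac{n-t}{n}\cdot(\text{something})$ yields the $\tfrac1e$ factor. \textbf{Prefix / ``$I$ already arrived'' part:} here $v_I(\sigs_{A_T}) \le v_{a_T}(\sigs_{A_T})$ directly, so this piece is bounded by $\E[ALG]$ itself without loss, contributing the remaining factor. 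Adding the two and using subadditivity once gives $\E[v_I(\sigs)] \le 2e\,\E[ALG]$, i.e.\ $\E[ALG]\ge \tfrac{1}{2e}\E[v_I(\sigs)]$. The main obstacle I anticipate is handling the ``suffix'' term $v_I(\sigs_{[n]\setminus A_t})$ cleanly: unlike the prefix term it is not directly the value of the agent the algorithm picks, so I need to bound it by $\max_{j\notin A_t} v_j(\sigs_{[n]})$ or similar and then argue, via the random order and \Cref{lem:secretary_stopping}, that in expectation the algorithm's selected value dominates a $\tfrac1e$-fraction of it — essentially re-running the secretary argument ``from the future side''. Getting the two charging arguments to use the \emph{same} $\E[ALG]$ without double counting, and confirming the harmonic sums telescope to exactly $e$ (with the $n\to\infty$ / dummy-agent padding remark already provided in the text to make it exact for all $n$), is where the real work lies; the rest is bookkeeping identical in spirit to the proof of \Cref{thm:prophet_myopic_algo}.
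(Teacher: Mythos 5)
You have the right opening (apply \Cref{lem:secretary_stopping} with $\texttt{best}(S)=\arg\max_{a\in S}v_a(\sigs_S)$, condition on $I\in A_t$, use the stopping rule to get $v_{a_t}(\sigs_{A_t})\geq v_I(\sigs_{A_t})$), but from there your plan has a genuine gap, and I don't think it closes without the key technical idea you've left as a ``to do.'' Conditioning on the stopping time $T=t$ and then splitting on $\ind[I\in A_t]$ versus $\ind[I\notin A_t]$ doesn't give you a usable handle on either the suffix term or the ``$I$ not yet arrived'' term: when $I\notin A_t$, the stopping rule gives no relation at all between $v_{a_t}(\sigs_{A_t})$ and $v_I(\sigs)$, so there is nothing to ``charge $v_I(\sigs)$ against $\E[ALG]$'' with; and even when $I\in A_t$, the suffix $v_I(\sigs_{[n]\setminus A_t})$ has nothing to do with what the algorithm selects (the algorithm stops at time $t$, it never examines $[n]\setminus A_t$). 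The hand-wave of ``re-running the secretary argument from the future side'' is exactly where a concrete mechanism is needed, and your displayed ``$\le$'' (the one replacing the $I\notin A_t$ part of $v_I(\sigs_{A_t})+v_I(\sigs_{[n]\setminus A_t})$ by $v_I(\sigs_{[n]})$) actually goes the wrong way: subadditivity gives the reverse inequality.

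What makes the paper's proof work is a different and more global accounting. Rather than decomposing the optimum per-stopping-time, the paper only ever collects value on the event $a^\star\in A_t$, writes $\E[ALG]\geq\frac{\lfloor n/e\rfloor}{n}\sum_{t>n/e}\frac{\alpha_t}{t-1}$ with $\alpha_t:=\E[f(A_t)\mid a^\star\in A_t]$ (the expected value of $a^\star$ on a uniformly random $t$-subset containing $a^\star$), and then establishes the pairing inequality $\alpha_t+\alpha_{n-t+1}\geq\alpha_n=OPT$ by applying subadditivity and symmetry of the random order \emph{to the set function $f$}, not to a per-step prefix/suffix. The factor of $2$ comes from this pairing, and the factor $e$ from a careful three-way split of the harmonic sum at $t_0=\lceil n/e\rceil$, $t_1=\lceil n/(e-1)\rceil$, $t_2=n-\lceil n/e\rceil+1$, chosen precisely so that $t_0+t_2=n+1$ and $2t_1\geq n+1$, making the paired $\alpha$'s each at least $\alpha_n$ in total. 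None of these three pieces appears in your plan, and I don't see how to obtain them from the per-time decomposition you set up. You would need to discover both the $\alpha_t$ reformulation and the pairing $\alpha_t+\alpha_{n-t+1}\geq OPT$ to make this go through.
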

\begin{proof}
We define the random variable $T\in \{1, \dots, n, \infty\}$ to be the stopping time of the algorithm. In the secretary setting, $n$ agents from a set $A$ arrive in a uniformly random order $a_1, \dots, a_n$. Recall that we labeled agents according to their arrival order, that is, in the algorithm, 
$$
\forall i\in [n], \forall J\subseteq [n],\qquad
v_i(\sigs_J) := \bar v_{a_i}(\bar \sigs_{\{a_j | j\in J\}}),
$$
where $\bar v$ and $\bar \sigs$ are the original, fixed valuation functions and signals (determined adversarially) before applying the random ordering. In particular, there exists an agent $a^\star\in A$ with the largest value $OPT = \bar v_{a^\star}(\bar\sigs)$. For convenience, we define the set function
$$
\forall X\subseteq A, \qquad f(X) := \bar v_{a^\star}(\bar s_X),
$$
that is, $f(X)$ denotes the estimated value of $a^\star$ only using the signals of $X\subseteq A$.

Next, we define the (random) set $\arrived{t} := \{a_1, \dots, a_t\}$ of agents who have arrived at time $t$. Observe that the stopping rule of \Cref{algo:secretary_farsighted_algo} corresponds to \Cref{lem:secretary_stopping} with $k = \lfloor n/e\rfloor$ and 
$$
\forall S\subseteq A,\qquad \texttt{best}(S) := \text{argmax}_{a\in S} \bar v_a(\bar \sigs_S).
$$
Using \cref{lem:secretary_stopping}, the event where the algorithm stops at time $T=t$ is independent of $\arrived{t}$, and has probability equal to
\begin{equation}\label{eq:secretary_farsighted_algo}
\forall t > n/e,\qquad
\Pr[T = t\,|\,\arrived{t}] = \frac{\lfloor n/e\rfloor}{t(t-1)}
\end{equation}
We write the expected welfare obtained by the algorithm as
\begin{align*}
    \E[ALG] &\ge \sum_{t=\lceil n/e\rceil}^n \E[\ind[T=t]\cdot v_t(\sigs_{[t]})] &\text{\dmedit{(equality holds for myopic)}}\\
    &\geq \sum_{t=\lceil n/e\rceil}^n \E[\ind[T=t]\cdot v_t(\sigs_{[t]})\cdot\ind[a^\star \in \arrived{t}]]&\text{(always smaller)}\\
    &\geq \sum_{t=\lceil n/e\rceil}^n \E[\ind[T=t]\cdot f(\arrived{t})\cdot\ind[a^\star \in \arrived{t}]]&\text{(by the stopping condition)}\\
    &= \sum_{t=\lceil n/e\rceil}^n \frac{\lfloor n/e\rfloor}{t(t-1)}\cdot\E[f(\arrived{t})\cdot\ind[a^\star \in \arrived{t}]]&\text{(using \Cref{eq:secretary_farsighted_algo})}\\
    &= \sum_{t=\lceil n/e\rceil}^n \frac{\lfloor n/e\rfloor}{t(t-1)}\cdot\frac{t}{n}\cdot \E[f(\arrived{t})\,|\,a^\star \in \arrived{t}]&\text{($a^\star\in\arrived{t}$ with probability $t/n$)}
    % \\&= \sum_{t=\lceil n/e\rceil}^n \frac{\lfloor n/e\rfloor}{n(t-1)}\cdot \alpha_t
\end{align*}
Next we define
$$\forall t \in [n], \qquad \alpha_t := \E[f(\arrived{t})\,|\,a^\star \in \arrived{t}].$$
which gives the inequality
$$
\E[ALG] \geq \frac{\lfloor n/e\rfloor}{n} \sum_{t=\lceil n/e\rceil}^n \frac{\alpha_t}{t-1}.
$$
Alternatively, $\alpha_t$ it is the expected value of $f(X\cup\{a^\star\})$, given a random subset $X\subseteq A\setminus\{a^\star\}$ of size $|X| = t-1$. In particular, $\alpha_n = OPT$, the optimal farsighted welfare, and the sequence $\alpha_t$ is non-decreasing. By linearity of expectation, and using the fact that $f$ is a monotone subadditive set function, we have that 
\begin{align*}
\alpha_n = f(A) &= \E_{A_t}[f(A)\,|\, a^\star\in \arrived{t}]&\text{(by definition)}\\
&\leq \E_{A_t}[f(A_t) + f(A\setminus A_t)\,|\,a^\star\in \arrived{t}] &(\text{by subadditivity})\\
&\leq \E_{A_t}[f(A_t) + f(A\setminus A_t\cup\{a^\star\})\,|\,a^\star\in \arrived{t}]
&\text{(by monotonicity)}\\&
= \E_{A_t}[f(A_t)\,|\,a^\star\in \arrived{t}]
 + \E_{A_{n-t+1}}[f(\arrived{n-t+1})\,|\,a^\star\in \arrived{n-t+1}]
&\text{(by symmetry)}\\
&= \alpha_t + \alpha_{n-t+1}&\text{(by definition)}
\end{align*}
Finally, we will split the sum from $t_0 = \lceil n/e\rceil$ to $n$ in three: from $t_0$ to $t_1 = \lceil n/(e-1)\rceil$, from $t_1$ to $t_2 = n-\lceil n/e\rceil + 1$, and from $t_2$ to $n$.
By monotonicity of the $\alpha_t$'s, we have that
\begin{align*}
\sum_{t=\lceil n/e\rceil} \frac{\alpha_t}{t-1}& \geq 
\alpha_{t_0} \sum_{t=t_0}^{t_1} \frac{1}{t-1} +  
\alpha_{t_1} \sum_{t=t_1+1}^{t_2-1} \frac{1}{t-1} +  
\alpha_{t_2} \sum_{t=t_2}^{n} \frac{1}{t-1}\\
&\geq 
\alpha_{t_0} \ln\left(\frac{t_1}{t_0-1}\right) +  
\alpha_{t_1} \ln\left(\frac{t_2-1}{t_1}\right) +  
\alpha_{t_2} \ln\left(\frac{n}{t_2-1}\right)\\
&\geq 
\alpha_{t_0} \ln\left(\frac{e}{e-1}\right) +  
\alpha_{t_1} \ln\left(\frac{(e-1)^2}{e}\right) + 
\alpha_{t_2} \ln\left(\frac{e}{e-1}\right) + \alpha_{t_1}\Theta(1/n)\\
&\geq (\alpha_{t_0} + \alpha_{t_2}) \ln\left(\frac{e}{e-1}\right) + 2\alpha_{t_1} \ln\left(\frac{e-1}{\sqrt{e}}\right) + \alpha_{t_1}\Theta(1/n)\\
&\geq \alpha_{n}\ln\left(\frac{e}{e-1}\right) +\alpha_{n}\ln\left(\frac{e-1}{\sqrt{e}}\right) + \alpha_{n}\Theta(1/n)= \alpha_n/2 + \alpha_{n}\Theta(1/n)
\end{align*}
Multiplying both sides by $\lfloor n/e\rfloor/n$, we obtain that $OPT/\E[ALG] \leq 2e + O(1/n)$. 
Recalling our remark (dummy agents) from the beginning of the section, we can drop the lower-order term and consider only the limit of the approximation ratio for $n\rightarrow \infty$.
\end{proof}

\subsection{Mechanism with myopic or farsighted agents}\label{sec:secretary_mech}

In the previous section, we presented a simple $2e$-approximation in the algorithmic setting, without considering the agents' incentives.
\dmedit{As before, if the valuation functions satisfy the single crossing condition, then our $2e$-approximation algorithm can be implemented truthfully with appropriate payments (which is prompt for myopic agents and tardy for farsighted agents respectively).

\begin{corollary}\label{cor:secretary_SC}
If the valuation functions satisfy the single crossing condition, then \Cref{algo:secretary_farsighted_algo} provides an EPIC mechanism by charging price $p_t$ for the selected agent $t$, where for farsighted agents
\[
p_t = \inf\{v_t(s_{-t},s'_t) | s'_t\ge 0 \text{ s.t. } v_t(s_{[t-1]},s'_t) \ge v_i(s_{[t-1]},s'_t) \text{ for all } i>t \},
\]
and for myopic agents
\[
p_t = \inf\{v_t(s_{[t-1]},s'_t) | s'_t\ge 0 \text{ s.t. } v_t(s_{[t-1]},s'_t) \ge v_i(s_{[t-1]},s'_t) \text{ for all } i>t \}.
\]
\end{corollary}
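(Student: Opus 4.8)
The plan is to apply \Cref{lem:truthful}: that lemma reduces EPIC-ness to a single requirement — that each agent's allocation be monotone in her own signal — and, once this holds, it hands us the payment, so the argument splits into (i) verifying that the allocation rule of \Cref{algo:secretary_farsighted_algo} is monotone under the single-crossing hypothesis, and (ii) checking that the payment produced by \Cref{lem:truthful} is exactly the displayed $p_t$. I would first fix the realized arrival order: for a fixed order, \Cref{algo:secretary_farsighted_algo} is a deterministic allocation rule $x_t(\sigs_{[t]})$, and since EPIC is an ex-post property it is enough to establish it order by order — the randomized-order mechanism is then a lottery over deterministic EPIC mechanisms, hence universally truthful in the sense noted after \Cref{lem:truthful}. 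Note also that the payment in \Cref{lem:truthful} is tardy for a farsighted agent (its objective is $v_t(\sigs_{-t}, s_t')$) and prompt for a myopic one (its objective is $v_t(\sigs_{[t-1]}, s_t')$, so it depends only on $\sigs_{[t]}$), which already separates the two cases of the corollary.

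Step (i) is the heart of the matter. Fix the order and the other signals $\sigs_{-t}$. Whether the run reaches step $t$ without stopping earlier, and whether $t > n/e$, are determined by $\sigs_{[t-1]}$ and the order alone and do not involve $s_t$; conditional on reaching step $t$, agent $t$ is selected precisely when her value on the observed signals, $v_t(\sigs_{[t]})$, is at least that of every agent she is compared against in the stopping test. Raising $s_t$ increases $v_t(\sigs_{[t]})$ but also increases each competitor's value (all $v_i$ are monotone), so monotonicity of the allocation is not automatic — and indeed it can fail in general, which is exactly why the unrestricted case needed the coin flip of \Cref{algo:prophet_myopic_mechanism}. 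Single crossing is what rescues it: the $t$-th coordinate has a weakly larger marginal effect on $v_t$ than on any $v_i$, equivalently the map sending $s_t$ to $v_t(\sigs_{[t]}) - v_i(\sigs_{[t]})$ is non-decreasing, so the set of values of $s_t$ for which agent $t$ passes the test is an up-set, of the form $[\theta_t, \infty)$ up to its boundary point. Hence $x_t(\sigs_{[t]})$ is non-decreasing in $s_t$; this is the one step with real content, the rest being bookkeeping against \Cref{lem:truthful}.

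For step (ii), I read the payment off \Cref{lem:truthful}. When $t$ is selected, the feasible set $\{s_t' \ge 0 : x_t(\sigs_{[t-1]}, s_t') = 1\}$ is nonempty (it contains the realized $s_t$), and because on a winning instance the run reached step $t$ and $t > n/e$, that feasible set is exactly the set of $s_t' \ge 0$ with $v_t(\sigs_{[t-1]}, s_t') \ge v_i(\sigs_{[t-1]}, s_t')$ for every competitor $i$ of the stopping rule — the strict inequality used inside the algorithm and the weak one in the displayed $p_t$ yielding the same infimum $\theta_t$. Substituting this feasible set into the farsighted payment of \Cref{lem:truthful}, whose objective is $v_t(\sigs_{-t}, s_t')$, gives the first displayed formula; substituting it into the myopic payment, whose objective is $v_t(\sigs_{[t-1]}, s_t')$, gives the second; and when $t$ is not selected, $x_t = 0$ and $p_t = 0$. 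The main obstacle throughout is the monotonicity of Step (i), which is where the single-crossing assumption is genuinely used.
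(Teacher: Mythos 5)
Your proof is correct and is exactly the argument the paper intends (the corollary is stated without proof, relying implicitly on \Cref{lem:truthful}): single crossing makes $s_t \mapsto v_t(\sigs_{[t]})-v_i(\sigs_{[t]})$ non-decreasing, so the allocation of \Cref{algo:secretary_farsighted_algo} is monotone in each agent's own signal, and the payments of \Cref{lem:truthful} then specialize to the displayed formulas in the farsighted and myopic cases respectively. The only remark is cosmetic: the competitors in the stopping rule of \Cref{algo:secretary_farsighted_algo} are the agents $i<t$ already seen (the corollary's ``$i>t$'' appears to be carried over from the prophet version), which your order-agnostic phrasing of ``every competitor of the stopping rule'' already accommodates.
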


However, without the single crossing assumption, we cannot obtain an EPIC mechanism using \Cref{algo:secretary_farsighted_algo}. In this section, we show how to achieve a $4e$-approximation with a truthful stopping rule (monotone in each agent's signal).}
The intuition behind our mechanism is quite simple: we combine the random sampling mechanism of \cite{EdenFFGK19}, which achieves a $4$-approximation in the offline setting, with the $e$-approximation stopping rule of the standard secretary problem.

\begin{algorithm}
\quad Define $t_0 = \lfloor n/2\rfloor$. % and $t_1 = t_0 + \lfloor n/(2e) \rfloor$.
Stop at the first time $t$ such that:
\begin{itemize}
\item $t > t_0 + \lfloor n/(2e) \rfloor$ (i.e., skip a constant fraction of agents), and
\item $v_t(\sigs_{[t_0]\cup\{t\}}) > v_i(\sigs_{[t_0]\cup\{i\}})$ for all $t_0 < i<t$.
\end{itemize}
\quad At the end, charge $t$ a price of $\inf\{v_t(\sigs_{-t}, s_t')\,|\,s_t'\geq 0\text{ s.t. }v_t(\sigs_{[t_0]}, s_t') > \max_{t_0 < i < t} v_i(\sigs_{[t_0]\cup\{i\}})\}$.
\caption{4e-approximation mechanism for the secretary model.}
\label{algo:secretary_farsighted_mechanism}
\end{algorithm}

\begin{theorem}\label{thm:secretary_farsighted_mechanism}
\Cref{algo:secretary_farsighted_mechanism} is a $4e$-approximation. {That is, the expected (myopic) value of the accepted agent is at least $\frac{1}{4e}\max_i v_i(\sigs)$.}
\end{theorem}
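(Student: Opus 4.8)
The plan is to combine the two ingredients advertised before the algorithm: the random-partition $4$-approximation of \citet{EdenFFGK19} applied to the first $t_0 = \lfloor n/2\rfloor$ agents (the "sample"), and the $e$-loss from running a secretary-style stopping rule on the remaining $\approx n/2$ agents. Let $a^\star$ be the agent attaining $OPT = \bar v_{a^\star}(\bar\sigs)$, and condition on the random partition of $A$ into the sample $\arrived{t_0}$ and the rest. First I would handle the case $a^\star \in A\setminus\arrived{t_0}$, which happens with probability $\approx 1/2$. In that case, for each post-sample position the algorithm's stopping test compares $v_i(\sigs_{[t_0]\cup\{i\}})$ across the arrived post-sample agents; this is exactly the setup of \Cref{lem:secretary_stopping} applied to the ground set $A\setminus\arrived{t_0}$, with $k = \lfloor n/(2e)\rfloor$ and $\texttt{best}(S)$ picking the agent maximizing $\bar v_a(\bar\sigs_{\arrived{t_0}\cup\{a\}})$. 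So conditioned on the partition, the algorithm stops on the post-sample agent whose "value evaluated on the sample plus its own signal" is largest, and by \Cref{lem:secretary_stopping} this top agent is selected with probability $\tfrac{k}{t(t-1)}$ summed appropriately, i.e. with the usual secretary probability tending to $1/e$.

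The key estimate is then to lower bound the myopic value $v_T(\sigs_{[T]})$ of whatever agent $T$ is selected. Since the stopping rule selects the post-sample maximizer of $g(a):= \bar v_a(\bar\sigs_{\arrived{t_0}\cup\{a\}})$, and $a^\star$ is one of the post-sample candidates (in this case), the selected agent satisfies $v_T(\sigs_{[T]}) \ge v_T(\sigs_{[t_0]\cup\{T\}}) \ge g(a^\star) = \bar v_{a^\star}(\bar\sigs_{\arrived{t_0}\cup\{a^\star\}})$, where the first inequality is monotonicity (the sample $\arrived{t_0}$ is contained in $[T]$ since $T > t_0$). Now I take expectation over the random partition: $\E[\bar v_{a^\star}(\bar\sigs_{\arrived{t_0}\cup\{a^\star\}})\mid a^\star\notin\arrived{t_0}]$ is the expected value of $a^\star$ evaluated on a uniformly random half of the other signals together with its own, and by the subadditivity-over-signals argument of \citet{EdenFFGK19} — writing $f(X) := \bar v_{a^\star}(\bar\sigs_X)$ as in the proof of \Cref{thm:secretary_farsighted_algo}, so $f(A) \le f(X) + f(A\setminus X)$ and by symmetry both halves have equal expectation — this is at least $\tfrac12 OPT$. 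Combining: $\E[ALG] \ge \Pr[a^\star\notin\arrived{t_0}]\cdot \Pr[\text{secretary rule picks the }g\text{-max}]\cdot \tfrac12 OPT \to \tfrac12 \cdot \tfrac1e \cdot \tfrac12 \, OPT = \tfrac{1}{4e}OPT$, where the sampling-phase off-by-one in $t_0$ and $\lfloor n/(2e)\rfloor$ contributes only lower-order terms that we discard via the dummy-agent padding remark from the start of the section.

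Two points need care. First, the independence: I must check that the event "the secretary rule stops at position $t$" depends only on the \emph{order} of the post-sample agents and not on \emph{which} subset of $A$ landed post-sample — this is precisely what \Cref{lem:secretary_stopping} gives, but one has to note that the comparator $g(a) = \bar v_a(\bar\sigs_{\arrived{t_0}\cup\{a\}})$ depends on the realized sample set $\arrived{t_0}$, so the clean way is to condition on the partition $(\arrived{t_0}, A\setminus\arrived{t_0})$ first and apply the lemma inside that conditioning, where $\texttt{best}$ is a fixed function. Second, truthfulness: I need the allocation to be monotone in each $s_i$ so that \Cref{lem:truthful} applies and the stated price (a tardy payment, evaluated on $\sigs_{-t}$) makes it EPIC. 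For a sample agent $i \le t_0$, its signal only enters the comparisons symmetrically as part of $\sigs_{[t_0]}$ and raising $s_i$ cannot make a previously-losing agent $i$ win, since $i\le t_0$ is never selected at all — sample agents have allocation identically zero, hence trivially monotone. For a post-sample agent $t$, raising $s_t$ only increases $v_t(\sigs_{[t_0]\cup\{t\}})$, making it more likely to exceed $\max_{t_0<i<t} v_i(\sigs_{[t_0]\cup\{i\}})$ and (since later agents' tests compare against $t$'s value) weakly helping $t$ win; so the allocation is monotone and \Cref{lem:truthful} delivers the EPIC guarantee with the displayed infimum price. The main obstacle I anticipate is bookkeeping the conditioning cleanly — making sure the "random partition" and the "random order within the post-sample block" are disentangled so that the $\tfrac12$ (from subadditive sampling) and the $\tfrac1e$ (from the secretary rule) multiply cleanly without hidden correlations, and confirming that the floor terms genuinely vanish in the $n\to\infty$ limit.
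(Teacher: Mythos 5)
Your argument is correct and rests on the same three pillars as the paper's: condition on the random sample $\arrived{t_0}$ so that $\texttt{best}_R$ becomes a fixed map and \Cref{lem:secretary_stopping} applies to the post-sample block; use the stopping condition plus monotonicity to lower-bound the myopic value of the chosen agent by $f(\arrived{t_0}\cup\{a^\star\})$; and use subadditivity together with the near-symmetry of a half/half split to show that this quantity has expectation at least $OPT/2$. The one genuine difference is in how you pick the ``good'' sub-event on which to integrate the bound $g(T)\ge g(a^\star)$. The paper sums over stopping times $t$ the quantity $\Pr[T=t]\cdot\Pr[a^\star\in\arrived{t}\setminus\arrived{t_0}]$, i.e.\ it charges the events where $a^\star$ is post-sample and has already arrived by time $T$; you instead charge the event where the stopping rule selects the overall post-sample $g$-maximizer (so that $g(T)\ge g(a^\star)$ even if $a^\star$ arrives after $T$), and factor its probability as $\Pr[a^\star\notin\arrived{t_0}]\cdot\Pr[\text{classical secretary succeeds}]\to\tfrac12\cdot\tfrac1e$. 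These two events are incomparable subsets of the probability space, but both unwind to $\tfrac{k}{n}\sum_{s=k}^{m-1}\tfrac1s$ with $k=t_1-t_0$, $m=n-t_0$, and hence yield the same constant $\tfrac{1}{2e}$ in the limit. Your version is a bit more modular -- it isolates the classical $1/e$ secretary guarantee as a black box -- while the paper's term-by-term sum makes the finite-$n$ error terms explicit. One small slip in your truthfulness discussion: you write that raising $s_t$ ``weakly helps $t$ win'' because later agents' tests compare against $t$; in fact later agents are irrelevant to $t$'s allocation (by the time they arrive, $t$ has already been accepted or rejected) -- monotonicity holds simply because $v_t(\sigs_{[t_0]\cup\{t\}})$ is non-decreasing in $s_t$ while the comparison thresholds $v_i(\sigs_{[t_0]\cup\{i\}})$ for $t_0<i<t$ do not depend on $s_t$. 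This does not affect the conclusion.
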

\begin{proof}
We define the random variable $T\in \{1, \dots, n, \infty\}$ to be the stopping time of the algorithm. In the secretary setting, $n$ agents from a set $A$ arrive in a uniformly random order $a_1, \dots, a_n$. Recall that we labeled agents according to their arrival order, that is, in the algorithm, 
$$
\forall i\in [n], \forall J\subseteq [n],\qquad
v_i(\sigs_J) := \bar v_{a_i}(\bar \sigs_{\{a_j | j\in J\}}),
$$
where $\bar v$ and $\bar \sigs$ are fixed valuation functions and signals (worst case). In particular, there exist an agent $a^\star\in A$ with the largest value $OPT = \bar v_{a^\star}(\bar\sigs)$. For convenience, we define the set function
$$
\forall X\subseteq A, \qquad f(X) := \bar v_{a^\star}(\bar s_X).
$$
Let $t_0 := \lfloor n/2\rfloor$ and $t_1 := t_0 + \lfloor n/(2e)\rfloor$. Now, observe that the stopping rule of \Cref{algo:secretary_farsighted_mechanism} does not exactly correspond to the hypothesis in \Cref{lem:secretary_stopping}. Indeed, the agent arriving at time $t$ is only compared to agents $t_0 < i < t$, so the best agent of $A_t$ depends on the order in which they arrived. This is easily fixed if we say that the first $t_0$ are here to initialize the mechanism, which only start at time $t_0+1$. More formally, for every fixed set $A_{t_0} = R$ we define
$$
\forall S\subseteq A\setminus R,\qquad \texttt{best}_{R}(S) := \text{argmax}_{a\in S} \bar v_a(\sigs_{R\cup\{a\}}).
$$
Applying \Cref{lem:secretary_stopping} to the mechanism defined by $\texttt{best}_R$ and starting at time $t_0+1$, we have
\begin{equation}\label{eq:secretary_farsighted_mechanism}
\forall t > t_1,\qquad
\Pr[T = t\,|\,A_t,A_{t_0}] = \frac{t_1-t_0}{(t-t_0)(t-t_0-1)}.
\end{equation}
Next, we write the expected welfare of the algorithm as
\begin{align*}
    \E[ALG] &\ge \sum_{t=t_1+1}^n \E[\ind[T=t]\cdot v_t(\sigs_{[t]})] &\text{\dmedit{(equality holds for myopic)}}\\
    &\geq \sum_{t=t_1+1}^n \E[\ind[T=t]\cdot v_t(\sigs_{[t_0]\cup\{t\}})\cdot\ind[a^\star \in \arrived{t}\setminus\arrived{t_0}]]&\text{(always smaller)}\\
    &\geq \sum_{t=t_1+1}^n \E[\ind[T=t]\cdot f(\arrived{t_0}\cup\{a^\star\})\cdot\ind[a^\star \in \arrived{t}\setminus\arrived{t_0}]]&\text{(stopping condition)}\\
    &\geq \sum_{t=t_1+1}^n \frac{t_1-t_0}{(t-t_0)(t-t_0-1)}\cdot\E[f(\arrived{t_0}\cup\{a^\star\})\cdot\ind[a^\star \in \arrived{t}\setminus\arrived{t_0}]]&\text{(using \Cref{{eq:secretary_farsighted_mechanism}})}\\
    &\geq \sum_{t=t_1+1}^n \frac{t_1-t_0}{(t-t_0)(t-t_0-1)}\cdot\frac{t-t_0}{n}\cdot \E[f(\arrived{t_0}\cup\{a^\star\})]& (\Pr[a^\star\in\arrived{t}\setminus\arrived{t_0}]=(t-t_0)/n)
\end{align*}
Using the subadditivity of $f$, we have that
\begin{align*}
OPT = f(A) &= \E[f(A)]&\text{(by definition)}\\
% &\leq \E[f(A_{t_0}\cup\{a^\star\}) + f(A\setminus (A_{t_0}\cup\{a^\star\}))\,|\,a^\star \notin \arrived{t_0}]&\text{(by subadditivity)}\\
&\leq \E[f(A_{t_0}\cup\{a^\star\}) + f(A\setminus (A_{t_0}\cup\{a^\star\}))]&\text{(by subadditivity)}\\
&\leq 2\cdot \E[f(A_{t_0}\cup\{a^\star\})]&\text{(by symmetry)}
\end{align*}
Overall, we obtain that
\begin{align*}
\E[ALG] &\geq \frac{OPT}{2}\cdot \frac{(t_1-t_0)}{n}\sum_{t=t_1+1}^n \frac{1}{t-t_0-1}\geq OPT\cdot \frac{(t_1-t_0)}{2n} \ln\left(\frac{n-t_0}{t_1-t_0}\right)\geq OPT\cdot \frac{\lfloor n/(2e)\rfloor}{2n}
\end{align*}
Overall, we obtain $OPT/\E[ALG] \leq 4e + O(1/n)$.
Once again, using the remark made at the beginning of the section, we can drop the lower order term $O(1/n)$ by taking $n\to \infty$.
%having the stopping rule pretend it observes many dummy agents of value $0$, which artificially increases $n$ but does not deteriorate the approximation ratio.

\end{proof}

We observe that \Cref{algo:secretary_farsighted_mechanism} is a truthful mechanism by using~\Cref{lem:truthful}.

\begin{lemma}\label{lem:sec_truthful}
    \Cref{algo:secretary_farsighted_mechanism} is an EPIC mechanism
\end{lemma}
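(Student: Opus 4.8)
The plan is to invoke \Cref{lem:truthful}: since the price charged by \Cref{algo:secretary_farsighted_mechanism} is already written in the form prescribed there, it suffices to check that the induced allocation rule is monotone non-decreasing in each agent's own signal and depends only on the signals available at her arrival time, and then to verify the stated price is the one produced by the lemma. Write $t_1 := t_0 + \lfloor n/(2e)\rfloor$ and let $x_t(\sigs_{[t]})\in\{0,1\}$ be the indicator that the algorithm stops at agent $t$. The easy case is $t\le t_1$: such an agent lies in the skip phase, so $x_t\equiv 0$ and $p_t\equiv 0$ irrespective of $s_t$; monotonicity (indeed truth-telling) is then vacuous, since although $s_t$ feeds into later comparisons, agent $t$ can never win and is therefore indifferent among all reports.

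The substantive case is $t>t_1$, where agent $t$ wins if and only if (i) the algorithm did not stop at any time in $(t_1,t)$, and (ii) $v_t(\sigs_{[t_0]\cup\{t\}})>v_i(\sigs_{[t_0]\cup\{i\}})$ for every $t_0<i<t$. The key point, and the one place the argument needs care, is that both (i) and the right-hand side of (ii) are independent of $s_t$: stopping at a time $t'<t$ is decided by the quantities $v_{t'}(\sigs_{[t_0]\cup\{t'\}})$ and $v_i(\sigs_{[t_0]\cup\{i\}})$ with $t_0<i<t'$, all of which are evaluated on signal sets that omit index $t$ (because $t>t'>i>t_0$), and likewise $v_i(\sigs_{[t_0]\cup\{i\}})$ for $t_0<i<t$ does not involve $s_t$. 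Hence, with $\sigs_{[t-1]}$ fixed, $x_t(\sigs_{[t-1]},s_t)$ equals a factor that is constant in $s_t$ times $\ind\bigl[v_t(\sigs_{[t_0]\cup\{t\}})>\max_{t_0<i<t}v_i(\sigs_{[t_0]\cup\{i\}})\bigr]$, which is non-decreasing in $s_t$ by monotonicity of $v_t$; and $x_t$ depends only on $\sigs_{[t]}$. By \Cref{lem:truthful} this yields an EPIC mechanism, with prompt payments in the myopic case.

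It remains to identify the price. On the event that agent $t$ is charged she has won, so (i) holds and is insensitive to her report, whence $\{s_t'\ge 0: x_t(\sigs_{[t-1]},s_t')=1\}=\{s_t'\ge 0: v_t(\sigs_{[t_0]},s_t')>\max_{t_0<i<t}v_i(\sigs_{[t_0]\cup\{i\}})\}$; substituting this set into the payment formula of \Cref{lem:truthful} for farsighted agents reproduces exactly the tardy price $\inf\{v_t(\sigs_{-t},s_t'):s_t'\ge 0,\ v_t(\sigs_{[t_0]},s_t')>\max_{t_0<i<t}v_i(\sigs_{[t_0]\cup\{i\}})\}$ charged by the algorithm, and the myopic version $\inf\{v_t(\sigs_{[t-1]},s_t'):\ldots\}$ in the prompt case. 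The main obstacle is precisely the decoupling observation in the previous paragraph: once the ``did not stop before $t$'' event is recognized as $s_t$-independent, agent $t$'s winning rule collapses to a monotone threshold in her own signal and everything else is a routine instantiation of \Cref{lem:truthful}.
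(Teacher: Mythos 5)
Your argument is correct and takes essentially the same route as the paper's proof: reduce to \Cref{lem:truthful} by verifying that $x_t$ is monotone non-decreasing in $s_t$, noting that sample-phase agents can never win and that for $t>t_1$ the signal $s_t$ only enters through $v_t(\sigs_{[t_0]\cup\{t\}})$. You are merely more explicit than the paper in observing that the ``algorithm did not stop before $t$'' event is $s_t$-independent (because all stopping comparisons at times $t'<t$ are evaluated on index sets $[t_0]\cup\{t'\}$ and $[t_0]\cup\{i\}$ with $i<t'<t$, hence omitting $t$), a step the paper leaves implicit.
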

\begin{proof}
    This follows by observing that the allocation for agent $t$ is monotone non-decreasing in her signal $s_t$. Because: first, none of the sample agents will be allocated no matter their signal, and second, and at time $t$ agent $t$'s signal is only used to to determine her own estimated value $v_t(\sigs_{[t_0]\cup\{t\}})$ and this is monotone in her signal $s_t$. Thus we can charge the appropriate payments given by \Cref{lem:truthful} to obtain EPIC mechanism \dmedit{for both the myopic and farsighted settings. Moreover, for the myopic setting the value and payments don't depend on future signals, and hence, the mechanism can be implemented with prompt payments.}
\end{proof}

\subsection{Extensions beyond our mechanisms}\label{extensionprivatevals}
We remark that the secretary setting with myopic agents is especially well-behaved. In particular, the random order gives (in expectation) an outlook on all agents' valuations, since we see a random subset of them (i.e. the yet-arrived ones), evaluated on a random subset of the signals. This continues to hold even if the valuations are \emph{not} known, but private information of the agents.
Only recently, \citet{EdenGZ22,EdenFGMM23} give a constant approximation mechanism in the offline setting with submodular valuations (approximation ratio $5.55$). A construction similar to that of \Cref{algo:secretary_farsighted_mechanism} can be combined with their result to obtain a constant approximation mechanism for the secretary problem with myopic agents and submodular valuations. However, 
obtaining similar results for more challenging settings than this (secretary, myopic) seems out of reach with standard reductions, as having private valuations gives too much strategic power to the agents interacting with the mechanism.

\section{Conclusion}
Our results consider agents with interdependent valuations in context of the celebrated secretary and prophet inequalities problems, capturing (stochastic) online versions of single-item auctions with agents that exhibit interdependent valuations.
We give the first secretary and prophet algorithms and mechanisms for this setting, achieving small constant-factor approximations to the according standard benchmarks. This resolves (up to possibly improvements in the constant) the according algorithmic and mechanism design problems for both myopic and farsighted agents with public, subadditive valuations. 
The fact that our results are all constructive and obtained by simple stopping rules especially raises hope that in future work, they can be extended to different settings.
As one direction, it is an interesting question to investigate the case of private valuations, and prove for which settings constant approximations are/are not possible (see short discussion for the secretary setting in \cref{extensionprivatevals}).

While arguably, our considered class of subadditive valuation functions poses a natural barrier for the performance of online algorithms, one should also investigate performance of such stopping rules when values satisfy other properties from the hierarchy of complement-free valuations defined by \citet{LehmannLN06} (see also the discussion for submodular valuations in \cref{app:submodular}).

Finally, extending from the simple setting, where only one online agent can be chosen, towards combinatorial problems (e.g., with cardinality/matroid constraints on the chosen subset, or for certain combinatorial auctions) would be a consequential next step.

\newpage
% Bibliography
\bibliographystyle{plainnat}
\bibliography{EC24/biblio}
\newpage

% Appendix
\appendix

\section{Improved bounds for submodular valuations}\label{app:submodular}

In this section we present an online algorithm that obtains an improved approximation ratio of $4$ for the secretary model when the valuations are submodular over signals (for both the myopic and farsighted settings). 

\begin{definition}[Submodular over signals]
    We say a valuation function $v(\cdot)$ is submodular over signals, if for any $i \in [n]$ and signal profiles $\sigs\ge \sigs'$ we have
    \[
    v(s_i,\sigs_{-i}) - v(s'_i,\sigs_{-i}) \le v(s_i,\sigs'_{-i}) - v(s'_i,\sigs'_{-i})
    \]
\end{definition}

The following lemma is a generalization of the Key lemma from \cite{EdenFFGK19}, which was proved in \cite{LuSZ22}.
\begin{lemma}\label{lem:submodular-bound}
    For any monotone submodular valuation function $v$, and for any random subset $A\subseteq [n]$ such that $A$ is drawn uniformly among subsets of size $k$, we have
    \[
    \E_{A}[v(\sigs_A)] \ge \frac{k}{n}\cdot v(\sigs)
    \]
\end{lemma}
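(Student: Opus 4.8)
The plan is to reduce Lemma~\ref{lem:submodular-bound} to a purely combinatorial fact about monotone submodular \emph{set} functions, and then prove that fact by a random-permutation telescoping argument. The hard part will be showing that ``averaged marginal values'' are non-increasing, which is the only place submodularity is really used.

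\textbf{Reduction to a set function.} First I would define $g:2^{[n]}\to\mathbb R_{\ge 0}$ by $g(X):=v(\sigs_X)$ and check that $g$ is monotone and submodular with $g(\emptyset)=v(\mathbf 0)\ge 0$. Monotonicity of $g$ is immediate from monotonicity of $v$. For submodularity, take $X\subseteq Y\subseteq[n]$ and $i\notin Y$; applying the submodular-over-signals inequality to the two profiles that agree with $\sigs$ on the coordinates of $Y$ (resp.\ of $X$) and are $0$ elsewhere, with the decreased coordinate $s_i'=0$, gives
$$
g(Y\cup\{i\})-g(Y)=v(s_i,(\sigs_Y)_{-i})-v(0,(\sigs_Y)_{-i})\le v(s_i,(\sigs_X)_{-i})-v(0,(\sigs_X)_{-i})=g(X\cup\{i\})-g(X),
$$
which is exactly the defining inequality of submodularity. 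Since $\E_A[v(\sigs_A)]=\E_A[g(A)]$ and $v(\sigs)=g([n])$, it then suffices to prove: for every monotone submodular $g$ with $g(\emptyset)\ge 0$ and every $0\le k\le n$, a uniformly random $k$-subset $A$ satisfies $\E[g(A)]\ge\frac kn g([n])$.

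\textbf{Permutation telescoping.} Next I would draw a uniformly random permutation of $[n]$, let $S_j$ be its first $j$ elements (so $S_0=\emptyset$), and set $\Delta_j:=g(S_j)-g(S_{j-1})$. Then $g(S_k)=g(\emptyset)+\sum_{j=1}^k\Delta_j$ and $g([n])=g(\emptyset)+\sum_{j=1}^n\Delta_j$, and since $S_k$ is a uniformly random $k$-subset, $\E[g(A)]=g(\emptyset)+\sum_{j=1}^k\E[\Delta_j]$. I will show the sequence $(\E[\Delta_j])_{j=1}^n$ is non-increasing; granting this, the average of its first $k$ terms is at least the average of all $n$ terms, so $\sum_{j=1}^k\E[\Delta_j]\ge\frac kn\sum_{j=1}^n\E[\Delta_j]=\frac kn\big(g([n])-g(\emptyset)\big)$, hence $\E[g(A)]\ge\frac kn g([n])+\big(1-\frac kn\big)g(\emptyset)\ge\frac kn g([n])$, as desired.

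\textbf{Main step: marginals decrease on average.} The crux is $\E[\Delta_{j+1}]\le\E[\Delta_j]$ for $1\le j\le n-1$, and this is where I expect the real work to lie. Note $\E[\Delta_{j+1}]$ is the expectation of $g(U\cup\{e\})-g(U)$ over a uniformly random disjoint pair $(U,e)$ with $|U|=j$; refining this by additionally choosing a uniformly random $f\in U$, the triple $(T,f,e)$ with $T:=U\setminus\{f\}$ is uniform over all triples consisting of a $(j-1)$-set $T$ and two distinct elements $e,f\notin T$ (a short cardinality check via $\binom nj j=\binom{n}{j-1}(n-j+1)$). Submodularity of $g$ gives
$$
g(U\cup\{e\})-g(U)=g(T\cup\{f,e\})-g(T\cup\{f\})\le g(T\cup\{e\})-g(T),
$$
and the right-hand side is independent of $f$; since the marginal distribution of $(T,e)$ is uniform over (disjoint $(j-1)$-set, outside element), its expectation is exactly $\E[\Delta_j]$. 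This yields $\E[\Delta_{j+1}]\le\E[\Delta_j]$ and finishes the proof. (Alternatively, this monotonicity of averaged marginals is standard — equivalently, $k\mapsto\E_{|A|=k}[g(A)]$ is concave for submodular $g$, or it follows from concavity of the multilinear extension along nonnegative directions — so one could instead just cite it.)
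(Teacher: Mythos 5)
Your proof is correct. Note, however, that the paper does not actually give a proof of \Cref{lem:submodular-bound}: it cites \citet{LuSZ22} (as a generalization of the key lemma of \citet{EdenFFGK19}), so there is no in-paper argument to compare against.

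Your argument is the clean, standard route: after translating ``submodular over signals'' into submodularity of the set function $g(X)=v(\sigs_X)$ (a translation you verify correctly, with $s_i'=0$ playing the role of the lowered coordinate), you observe that for a monotone submodular set function the map $k\mapsto \E_{|A|=k}[g(A)]$ has decreasing increments, which immediately yields the $\frac{k}{n}$ bound via telescoping along a random permutation. The coupling argument you give for $\E[\Delta_{j+1}]\le\E[\Delta_j]$ (deleting a uniformly random $f\in U$ and invoking submodularity) is tight and complete; in particular the cardinality check that $(T,e)$ remains uniform after marginalizing out $f$ is the one point that needs care, and you handle it. One small stylistic remark: the final inequality $\E[g(A)]\ge \frac{k}{n}g([n]) + (1-\frac{k}{n})g(\emptyset)$ is strictly stronger than what the lemma asserts, so your proof actually establishes a slightly sharper bound; you then correctly discard the nonnegative extra term $\bigl(1-\frac{k}{n}\bigr)g(\emptyset)$.
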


The algorithm is a slight modification of \Cref{algo:secretary_farsighted_algo}. In particular, the sampling phase involves $n/2$ agents instead of $n/e$.

\begin{algorithm}
\quad At step $t$, when agent $t$ arrives, stop if:
\begin{itemize}
\item $t > n/2$ (i.e., skip a constant fraction of agents), and
\item $v_t(\sigs_{[t]}) > v_i(\sigs_{[t]})$ for all $i<t$.
\end{itemize}
\caption{4-approximation algorithm under submodular valuations.}
\label{algo:secretary_sos}
\end{algorithm}

We are now ready to prove the main results of this section. In fact we prove a stronger statement that, in expectation, the myopic value of the accepted agent is a $4$-approximation to the farsighted benchmark. This immediately implies a $4$-approximation for both the myopic and farsighted settings.

\begin{theorem}\label{thm:sos-4approx}
    \Cref{algo:secretary_sos} is a $4$-approximation under submodular valuations. That is, the expected (myopic) values of the accepted agent is at least $\frac{1}{4}\max_i v_i(\sigs)$.
\end{theorem}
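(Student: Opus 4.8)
The plan is to mirror the structure of the proof of \Cref{thm:secretary_farsighted_algo}, but exploit the stronger submodularity bound (\Cref{lem:submodular-bound}) in place of the subadditivity splitting argument, which is exactly what lets the sampling phase shrink from $n/e$ to $n/2$ and the constant improve from $2e$ to $4$. First I would set up the same notation: fix the adversarial valuations $\bar v$ and signals $\bar\sigs$, let $a^\star$ be the agent achieving $OPT = \bar v_{a^\star}(\bar\sigs)$, define $f(X) := \bar v_{a^\star}(\bar s_X)$ for $X\subseteq A$, and let $\arrived{t} = \{a_1,\dots,a_t\}$ be the random set of arrived agents. Observe that \Cref{algo:secretary_sos} is an instance of \Cref{lem:secretary_stopping} with $k = \lfloor n/2\rfloor$ and $\texttt{best}(S) = \text{argmax}_{a\in S}\bar v_a(\bar\sigs_S)$, so for $t > n/2$ we get $\Pr[T=t\,|\,\arrived{t}] = \frac{\lfloor n/2\rfloor}{t(t-1)}$, independent of $\arrived{t}$.

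Next I would lower-bound $\E[ALG]$ along the same chain of inequalities as in \Cref{thm:secretary_farsighted_algo}: the myopic value of the accepted agent $v_t(\sigs_{[t]})$ is at least $\ind[a^\star\in\arrived{t}]\cdot f(\arrived{t})$ whenever the algorithm stops at $t$ (since then $a_t$ beats every earlier agent, in particular beats $a^\star$ if $a^\star$ has arrived, and $f(\arrived{t})$ is the estimate of $a^\star$'s value on the arrived signals). Combining with the stopping probability and the fact that $a^\star\in\arrived{t}$ with probability $t/n$, this gives
\[
\E[ALG] \;\geq\; \sum_{t=\lceil n/2\rceil}^{n} \frac{\lfloor n/2\rfloor}{t(t-1)}\cdot\frac{t}{n}\cdot \alpha_t
\;=\; \frac{\lfloor n/2\rfloor}{n}\sum_{t=\lceil n/2\rceil}^{n} \frac{\alpha_t}{t-1},
\]
where $\alpha_t := \E[f(\arrived{t})\,|\,a^\star\in\arrived{t}]$ is the expected value of $f(X\cup\{a^\star\})$ over a uniformly random $X\subseteq A\setminus\{a^\star\}$ of size $t-1$. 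Here is where submodularity enters instead of subadditivity: I would apply \Cref{lem:submodular-bound} to the monotone submodular function $\bar v_{a^\star}$ with the random size-$t$ set $\arrived{t}$ (conditioned on containing $a^\star$, or more cleanly to a size-$(t-1)$ random subset of $A\setminus\{a^\star\}$ together with the deterministic element $a^\star$, using monotonicity to drop the conditioning overhead) to conclude $\alpha_t \geq \frac{t}{n}\cdot OPT$, or at least $\alpha_t \geq \frac{t-1}{n}\cdot OPT$ after a careful accounting of the fixed coordinate $a^\star$. Plugging this in yields $\E[ALG] \geq \frac{\lfloor n/2\rfloor}{n^2}\sum_{t=\lceil n/2\rceil}^{n} \frac{t}{t-1}\cdot OPT \geq \frac{\lfloor n/2\rfloor}{n^2}\cdot\frac{n}{2}\cdot OPT \to OPT/4$ as $n\to\infty$, since each summand $\frac{t}{t-1} \geq 1$ and there are about $n/2$ terms.

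\textbf{Main obstacle.} The delicate point is the exact quantification in the submodular bound: \Cref{lem:submodular-bound} as stated gives $\E_A[v(\sigs_A)]\geq \frac{k}{n}v(\sigs)$ for a \emph{uniformly random} size-$k$ subset, whereas $\alpha_t$ conditions on $a^\star$ being present, so the random part is a size-$(t-1)$ subset of the remaining $n-1$ agents. I would need to argue that conditioning on $a^\star\in\arrived{t}$ and then applying the lemma to the "base" function $X\mapsto \bar v_{a^\star}(\bar s_{X\cup\{a^\star\}})$ (which is still monotone submodular in the remaining coordinates) gives $\alpha_t \geq \frac{t-1}{n-1}\cdot f(A) \geq \frac{t-1}{n}\cdot OPT$; the constants then have to be tracked to confirm the ratio tends to exactly $4$ and not something slightly worse, and finally I would invoke the dummy-agents remark from the start of \Cref{sec:secretary_algo} to drop the $O(1/n)$ term and get a clean $4$ for every $n$. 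Everything else is a routine re-run of the earlier proof.
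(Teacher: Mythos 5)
Your plan follows the paper's proof essentially verbatim: same notation, same reduction to \Cref{lem:secretary_stopping} with $k=\lfloor n/2\rfloor$, same chain of inequalities yielding $\E[ALG] \geq \tfrac{\lfloor n/2\rfloor}{n}\sum_{t\geq\lceil n/2\rceil}\tfrac{\alpha_t}{t-1}$, and the same invocation of \Cref{lem:submodular-bound} on the function restricted by fixing the $a^\star$-coordinate, giving $\alpha_t\geq\tfrac{t-1}{n-1}\cdot OPT$. The only small remark is that you weaken this to $\tfrac{t-1}{n}$ and so resort to the dummy-agent limit, whereas keeping $\tfrac{t-1}{n-1}$ gives $\E[ALG]\geq \tfrac{\lfloor n/2\rfloor}{n}\cdot\tfrac{n-\lceil n/2\rceil+1}{n-1}\cdot OPT\geq OPT/4$ exactly for every $n$, which is how the paper closes the argument.
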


{
\begin{proof}
We define the random variable $T\in \{1, \dots, n, \infty\}$ to be the stopping time of the algorithm. In the secretary setting, $n$ agents from a set $A$ arrive in a uniformly random order $a_1, \dots, a_n$. Recall that we labeled agents according to their arrival order, that is, in the algorithm, 
$$
\forall i\in [n], \forall J\subseteq [n],\qquad
v_i(\sigs_J) := \bar v_{a_i}(\bar \sigs_{\{a_j | j\in J\}}),
$$
where $\bar v$ and $\bar \sigs$ are the original, fixed valuation functions and signals (determined adversarially) before applying the random ordering. In particular, there exists an agent $a^\star\in A$ with the largest value $OPT = \bar v_{a^\star}(\bar\sigs)$. For convenience, we define the set function
$$
\forall X\subseteq A, \qquad f(X) := \bar v_{a^\star}(\bar s_X),
$$
that is, $f(X)$ denotes the estimated value of $a^\star$ only using the signals of $X\subseteq A$.

Next, we define the (random) set $\arrived{t} := \{a_1, \dots, a_t\}$ of agents who have arrived at time $t$. Observe that the stopping rule of \Cref{algo:secretary_sos} corresponds to \Cref{lem:secretary_stopping} with $k = \lfloor n/2\rfloor$ and 
$$
\forall S\subseteq A,\qquad \texttt{best}(S) := \text{argmax}_{a\in S} \bar v_a(\bar \sigs_S).
$$
Using \cref{lem:secretary_stopping}, the event where the algorithm stops at time $T=t$ is independent of $\arrived{t}$, and has probability equal to
\begin{equation}\label{eq:secretary_farsighted_algo_sos}
\forall t > n/2,\qquad
\Pr[T = t\,|\,\arrived{t}] = \frac{\lfloor n/2\rfloor}{t(t-1)}
\end{equation}
We write the expected welfare obtained by the algorithm as
\begin{align*}
    \E[ALG] &\ge \sum_{t=\lceil n/2\rceil}^n \E[\ind[T=t]\cdot v_t(\sigs_{[t]})]&\text{\dmedit{(equality holds for myopic)}}\\
    &\geq \sum_{t=\lceil n/2\rceil}^n \E[\ind[T=t]\cdot v_t(\sigs_{[t]})\cdot\ind[a^\star \in \arrived{t}]]&\text{(always smaller)}\\
    &\geq \sum_{t=\lceil n/2\rceil}^n \E[\ind[T=t]\cdot f(\arrived{t})\cdot\ind[a^\star \in \arrived{t}]]&\text{(by the stopping condition)}\\
    &= \sum_{t=\lceil n/2\rceil}^n \frac{\lfloor n/2\rfloor}{t(t-1)}\cdot\E[f(\arrived{t})\cdot\ind[a^\star \in \arrived{t}]]&\text{(using \Cref{eq:secretary_farsighted_algo_sos})}\\
    &= \sum_{t=\lceil n/2\rceil}^n \frac{\lfloor n/2\rfloor}{t(t-1)}\cdot\frac{t}{n}\cdot \E[f(\arrived{t})\,|\,a^\star \in \arrived{t}]&\text{($a^\star\in\arrived{t}$ with probability $t/n$)}
    % \\&= \sum_{t=\lceil n/e\rceil}^n \frac{\lfloor n/e\rfloor}{n(t-1)}\cdot \alpha_t
\end{align*}
Next we define
$$\forall t \in [n], \qquad \alpha_t := \E[f(\arrived{t})\,|\,a^\star \in \arrived{t}].$$
which gives the inequality
$$
\E[ALG] \geq \frac{\lfloor n/2\rfloor}{n} \sum_{t=\lceil n/2\rceil}^n \frac{\alpha_t}{t-1}.
$$
Alternatively, $\alpha_t$ it is the expected value of $f(X\cup\{a^\star\})$, given a random subset $X\subseteq A\setminus\{a^\star\}$ of size $|X| = t-1$. In particular, using \Cref{lem:submodular-bound} we have that 
$\alpha_t \geq \frac{t-1}{n-1} \cdot OPT$. Therefore
$$
\E[ALG] \geq \frac{\lfloor n/2\rfloor}{n} \cdot\frac{n-\lceil n/2\rceil +1}{n-1} \cdot OPT \geq \frac{OPT}{4},
$$
which concludes the proof.
\end{proof}
}

\end{document}